\newtheorem{theorem}{Theorem}[section]
\newtheorem{proposition}[theorem]{Proposition}
\newtheorem{corollary}[theorem]{Corollary}
\newtheorem{definition}[theorem]{Definition}
\newtheorem{remark}[theorem]{Remark}
\newtheorem{example}[theorem]{Example}
\def\NN{{\mathbb N}}
\def\ZZ{{\mathbb Z}}
\def\KK{{\mathbb K}}
\def\Ker{{\mathrm{Ker\,}}}
\def\GF{{\mathrm{GF}}}
\def\Gr{Gr\"obner}
\def\E0{{\sc E0}}
\def\BuDDy{{\sc BuDDy}}
\def\Sylvan{{\sc Sylvan}}
\def\bX{{\bar{X}}}
\def\bR{{\bar{R}}}
\def\bI{{\bar{I}}}
\def\bJ{{\bar{J}}}
\def\bT{{\bar{\mathrm{T}}}}
\def\bS{{\bar{\mathrm{S}}}}
\def\T{{\mathrm{T}}}
\def\cC{{\mathcal{C}}}
\def\dplus{\mathop{\dot{+}}}
\def\O{{\mathcal{O}}}
\begin{document}

\title[An algebraic attack to the Bluetooth stream cipher \E0] 
{An algebraic attack to the Bluetooth stream cipher \E0}

\author[R. La Scala]{Roberto La Scala$^*$}

\author[S. Polese]{Sergio Polese$^{**}$}

\author[S.K. Tiwari]{Sharwan K. Tiwari$^{\dagger}$}

\author[A. Visconti]{Andrea Visconti$^{**}$}

\address{$^*$ Dipartimento di Matematica, Universit\`a degli Studi di Bari
``Aldo Moro'', Via Orabona 4, 70125 Bari, Italy}
\email{roberto.lascala@uniba.it}

\address{$^{**}$ CLUB -- Cryptography and Coding Theory Group,
Dipartimento di Informatica, Universit\`a degli Studi di Milano, 
Via Celoria 18, 20133 Milano, Italy}
\email{sergio.polese,andrea.visconti@unimi.it}

\address{$^{\dagger}$ Scientific Analysis Group, Defence Research
\& Development Organization, Metcalfe House, Delhi-110054, India}
\email{shrawant@gmail.com}

\thanks{
The first author acknowledges the support of Universit\`a degli Studi di Bari,
Horizon Europe Seeds, Grant ref. PANDORA - S22. The third author thanks
the Scientific Analysis Group, DRDO, Delhi, Grant ref. 1369, for the support.
The fourth author acknowledges the support of Universit\`a degli Studi di Milano,
Grant ref. PSR20.
}

\subjclass[2000] {Primary 11T71. Secondary 12H10, 13P10}

\keywords{Stream ciphers; Algebraic difference equations; \Gr\ bases.}

\begin{abstract}
In this paper we study the security of the Bluetooth stream cipher \E0 from
the viewpoint it is a ``difference stream cipher'', that is, it is defined by a system
of explicit difference equations over the finite field $\GF(2)$. This approach
highlights some issues of the Bluetooth encryption such as the invertibility of its
state transition map, a special set of 14 bits of its 132-bit state which when
guessed implies linear equations among the other bits and finally a small
number of spurious keys, with 83 guessed bits, which are compatible with
a keystream of about 60 bits. Exploiting
these issues, we implement an algebraic attack using \Gr\ bases, SAT solvers and
Binary Decision Diagrams. Testing activities suggest that the version based on
\Gr\ bases is the best one and it is able to attack \E0 in about $2^{79}$ seconds
on an Intel i9 CPU. To the best of our knowledge, this work improves any previous
attack based on a short keystream, hence fitting with Bluetooth specifications.
\end{abstract}

\maketitle


\section{Introduction}

The Bluetooth protocol \cite{Bl} is one of the most important players in the
``wireless revolution'' of consumer electronics. This communication protocol
has started in 1999 in the mobile phones market and now it is present in almost
all mobiles, personal computers, wireless headset and speakers, remote controllers
and many other devices. Recently, the pandemic crisis has involved Bluetooth
as an excellent tool to trace close proximity contacts. Bluetooth is a secure
protocol which relies its privacy on the \E0 stream cipher. This cipher consists
of four independent Linear Feedback Shift Registers which are combined by means
of a non-linear Finite State Machine.

By assuming a ``known-plaintext attack'', the cryptanalysis of a stream cipher
is generally based on the knowledge of some amount of bits of its keystream
(see, for instance, \cite{Kl, MPS}). Following the introduction of the Bluetooth
protocol, a number of cryptanalytic results has been obtained that can be
essentially divided into two main classes: long or short keystream attacks.
Indeed, an attack is generally faster when providing a large number of keystream
bits (see, for instance, \cite{AK,FL,GBM}) but this is actually forbidden by
the Bluetooth design which has the payload of each frame associated to a single
key consisting of only 2745 bits. According to this, the cryptanalysis of \E0
in the present paper is based on a very short keystream containing about 60 bits.
Note that in addition to these data, we essentially assume the knowledge of 83 bits
of a 132-bit internal state of \E0 by brute force. To determine the remaining bits
is very fast by \Gr\ bases computations which lead to a total running time
for our attack of approximately $2^{79}$ seconds.

Another main distinction in the attacks to Bluetooth encryption is that some
of them are correlations attacks as in \cite{LMV} but one has also algebraic
cryptanalysis \cite{Ba}. In the class of algebraic attacks to \E0 or similar
stream ciphers, the most common approach involves Binary Decision Diagrams (briefly BDDs)
as in \cite{Kr,KS,SGPS,SW} and just few papers considered other solvers of polynomial
systems as \Gr\ bases and XL-algorithm \cite{AA,AF,CM}. The cryptanalysis we propose
is based on \Gr\ bases which show to be feasible solvers for polynomials systems
having a few number of solutions. Indeed, the stream cipher \E0 tends to have
few keys that are compatible with a small number of keystream bits and this
can be considered a possible flaw.

By means of a complete implementation of the proposed method, we compare
the performance of \Gr\ bases with SAT solvers and BDDs using a large
test set. Another practical algebraic attack to \E0 one has in the literature
is the BDD-based attack described in \cite{SGPS}. Our \Gr\ bases timings are much
better than the runtimes we obtain for SAT solvers and BDDs where the latter
ones confirm and improve the timings in \cite{SGPS}.

Note that Bluetooth protocol implements a reinitialization of the initial state
of \E0 using the last 128 keystream bits obtained by clocking the cipher 200 times.
Since our algebraic attack can be performed using less than 128 keystream bits,
a double initialization can be recovered simply by applying the attack twice.

The paper is structured as follows. In Section 2 we explain how to solve a polynomial
system with coefficients and solutions in a finite field by means of a guess-and-determine
strategy, that is, by using the exhaustive evaluation over the finite field of a subset of
variables and \Gr\ bases as solvers of the remaining variables. We explain that
this method is feasible especially when the system has a single or few solutions.
In Section 3 we review and expand the theory of difference stream ciphers that
has been recently introduced in \cite{LST} to cryptanalyze stream and block ciphers.
In particular, we show that the invertibility property of the explicit difference
system governing the evolution of the internal state of a difference stream cipher
is a possible issue for its security. Moreover, we study methods to eliminate
the variables of this explicit difference system in order to speed-up an algebraic
attack to the keys that are consistent with a given keystream.

In Section 4 we describe the Bluetooth stream cipher \E0 and in Section 5 and 6
we show that it is an invertible difference stream cipher by providing its
explicit difference equations together with the ones of its inverse system.
In Section 6 we explain how cryptanalytic methods for difference stream
ciphers are applied in our attack to the Bluetooth encryption. In Section 7
we present the choice of the 83 variables that are brute forced in the
guess-and-determine strategy and we explain how 14 of them have been single out
by the difference stream cipher structure of \E0 in order to speed-up
the \Gr\ bases computations.

In Section 8 we present a complete statistics of our practical algebraic attack
to \E0 by comparing \Gr\ bases with SAT solvers and BDDs. The test set we use
are $2^{17}$ random evaluations of the 83 variables for $2^3$ different keys.
The testing activity clearly shows that \Gr\ bases perform better than
the other solvers with a total running time of about $2^{79}$ seconds with
an Intel i9 processor. To the best of our knowledge, the complexity $2^{83}$
also improves any previous attack with short keystreams (see, for instance,
\cite{Kl,SGPS}). A full cryptanalysis of \E0, including its double initialization trick,
it is therefore available with complexity $2^{84}$. We end the paper with Section 9
where some conclusions are drawn.

\section{Guess-and-determine strategy}

In algebraic cryptanalysis, to attack a stream, block or public key cipher
essentially consists in solving a system of polynomial equations over a finite field
$\KK = \GF(q)$ which has generally a single or few $\KK$-solutions. Indeed, this assumption
is a natural one if a reasonable amount of data as plaintexts, ciphertexts, keystreams
and so on, is available for the attack and few keys are compatible with such data.
To fix notations, let $r,n > 0$ be two integers and consider the polynomial system
\begin{equation}
\label{sys}
\left\{
\begin{array}{ccc}
f_1 & = & 0 \\
\vdots & & \vdots \\
f_n & = & 0 \\
\end{array}
\right.
\end{equation}
where $f_i\in P = \KK[x_1,\ldots,x_r]$, for all $1\leq i\leq n$.
Denote by $J = \langle f_1,\ldots,f_n \rangle$ the ideal of $P$ generated
by the polynomials $f_i$ and consider
\[
L = \langle x_1^q - x_1, \ldots, x_r^q - x_r \rangle\subset P.
\]
The generators of the ideal $L$ are called ``field equations'' because of the following
well-known result (see, for instance, \cite{Gh}).

\begin{proposition}
Let $\bar{\KK}$ be the algebraic closure of $\KK$ and denote
\[
V(J) = \{(\alpha_1,\ldots,\alpha_r)\in \bar{\KK}^r\mid f_i(\alpha_1,\ldots,\alpha_r) = 0\
(1\leq i\leq n)\}.
\]
Put $V_\KK(J) = V(J)\cap\KK^r$ and call $V_\KK(J)$ the {\em set of the
$\KK$-solutions of $J$, that is, of the polynomial system (\ref{sys})}. We have that
$V(L) = \KK^r$ and $V_\KK(J) = V(J + L)$ where $J + L\subset P$ is a radical ideal.
\end{proposition}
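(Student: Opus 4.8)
The plan is to establish the three assertions of the proposition in turn: first $V(L)=\KK^r$, then $V_\KK(J)=V(J+L)$, and finally that $J+L$ is radical. Throughout I would work over $\bar\KK$ and use the basic correspondence between ideals and affine varieties, together with the fact that $\KK=\GF(q)$ is the set of roots of $x^q-x$.

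First I would prove $V(L)=\KK^r$. A point $(\alpha_1,\dots,\alpha_r)\in\bar\KK^r$ lies in $V(L)$ precisely when $\alpha_j^q-\alpha_j=0$ for every $j$, i.e. when each $\alpha_j$ is a root of the polynomial $t^q-t\in\KK[t]$. Since the multiplicative group of $\GF(q)$ is cyclic of order $q-1$, Lagrange's theorem gives $\beta^{q-1}=1$ for all nonzero $\beta\in\GF(q)$, hence $\beta^q=\beta$ for all $\beta\in\GF(q)$, including $\beta=0$; so every element of $\GF(q)$ is a root. Conversely $t^q-t$ has at most $q$ roots in the field $\bar\KK$, and $|\GF(q)|=q$, so its root set is exactly $\GF(q)$. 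Therefore $\alpha_j\in\KK$ for all $j$, which shows $V(L)=\KK^r$.

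Next, $V_\KK(J)=V(J+L)$ follows by combining the above with the standard identity $V(J+L)=V(J)\cap V(L)$: since $V(L)=\KK^r$ we get $V(J+L)=V(J)\cap\KK^r=V_\KK(J)$ by the very definition of $V_\KK(J)$ given in the statement. The only small point to remark is that $V(I_1+I_2)=V(I_1)\cap V(I_2)$ for any ideals $I_1,I_2$, which is immediate from the fact that a point kills a sum of two ideals iff it kills each of them.

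Finally, I would show that $J+L$ is radical, and this is the step I expect to be the main (though still standard) obstacle, since it requires an argument beyond pure set-theoretic manipulation of varieties. The key fact is that $L$ itself is radical because each generator $x_j^q-x_j=\prod_{c\in\GF(q)}(x_j-c)$ is a separable (squarefree) univariate polynomial; more precisely, one checks that $P/L\cong\prod_{\alpha\in\KK^r}\bar\KK$ is a finite product of fields, hence reduced, so $L=\sqrt L$. To pass from $L$ radical to $J+L$ radical, I would use that modding out by $J+L$ is the same as further quotienting the reduced ring $P/L$ by the image of $J$: every prime of $P/L$ is maximal (the ring is a finite product of fields, or zero-dimensional reduced), the quotient of a finite product of copies of $\bar\KK$ by any ideal is again a product of copies of $\bar\KK$ (indexed by the subset $V_\KK(J)$), which is reduced; hence $J+L$ is radical. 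Alternatively, one can invoke Seidenberg's lemma: an ideal of $\KK[x_1,\dots,x_r]$ containing, for each variable $x_j$, a nonzero univariate squarefree polynomial in $x_j$ is radical — and $x_j^q-x_j\in J+L$ is exactly such a polynomial. I would cite \cite{Gh} (or Seidenberg's lemma) for this last implication rather than reproving it in full.
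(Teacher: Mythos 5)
Your proposal is correct; note that the paper itself gives no proof of this proposition at all, stating it as a well-known result and citing \cite{Gh}, and your argument (roots of $x^q-x$ give $V(L)=\KK^r$, then $V(J+L)=V(J)\cap V(L)$, then radicality via Seidenberg's lemma or the reducedness of $P/L$) is exactly the standard one found in that reference. One small inaccuracy: since $P=\KK[x_1,\ldots,x_r]$ and the points of $V(L)$ have coordinates in $\KK$, the Chinese Remainder decomposition gives $P/L\cong\prod_{\alpha\in\KK^r}\KK$, a product of copies of $\KK$ rather than of $\bar{\KK}$; this does not affect your conclusion, as a finite product of fields is reduced either way.
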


An immediate upper bound to the complexity of computing $V_\KK(J)$ is clearly
\[
\lambda = q^r
\]
by assuming that the evaluation of all polynomials $f_i$ over a vector
$(\beta_1,\ldots,\beta_r)\in\KK^r$ is performed in unit time.
If $\KK = \GF(2)$, we have that SAT solvers may slightly improve such
complexity for special systems (see, for instance, \cite{Ba}, Paragraph 13.4.2.1).
Applications of the SAT solving to boolean polynomial systems arising in cryptography
are found, for example, in \cite{CB,EPV,MM}. Another approach to polynomial system
solving is symbolic computation, that is, to compute consequences of the equations
of the system ($\ref{sys}$) which allow to obtain easily its solutions. A suitable
method consists in computing a \Gr\ basis (see, for instance, \cite{GP,KR}) of the ideal
$J = \langle f_1,\ldots,f_n \rangle\subset P$. Indeed, by the Nullstellensatz Theorem
for finite fields (see \cite{Gh}) one obtains the following result.

\begin{proposition}
\label{uniqueGB}
Assume that the polynomial system $(\ref{sys})$ has a single or no $\KK$-solution.
Then, the (reduced) universal \Gr\ basis $G$ of the ideal $J + L$, that is, its \Gr\ basis
with respect to any monomial ordering of $P$ is
\[
G = 
\left\{
\begin{array}{cl}
\{x_1 - \alpha_1,\ldots,x_r - \alpha_r\} & \mbox{if}\ V_\KK(J) = \{(\alpha_1,\ldots,\alpha_r)\}, \\
\{1\} & \mbox{otherwise}.
\end{array}
\right.
\]
\end{proposition}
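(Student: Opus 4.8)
The plan is to deduce this statement from the previous proposition together with the finite-field Nullstellensatz, reasoning on the radical ideal $J + L$. First I would dispose of the empty case: if $V_\KK(J) = \emptyset$, then by Proposition~2.1 we have $V(J+L) = \emptyset$, so by the (weak) Nullstellensatz $J + L = P$, hence $1 \in J + L$ and the reduced \Gr\ basis of $J+L$ with respect to \emph{any} monomial ordering is $\{1\}$. This is independent of the ordering, so it is in particular the universal \Gr\ basis.

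For the main case, suppose $V_\KK(J) = \{(\alpha_1,\ldots,\alpha_r)\}$ is a single point. I would first check that the ideal $I = \langle x_1 - \alpha_1,\ldots,x_r-\alpha_r\rangle$ is exactly $J + L$. The inclusion $J + L \subseteq I$ is immediate, since every generator of $J+L$ vanishes at the point $(\alpha_1,\ldots,\alpha_r)$ and $I$ is the maximal ideal of all polynomials vanishing there. For the reverse inclusion I would invoke Proposition~2.1, which asserts that $J + L$ is radical; since $V(J+L) = V_\KK(J) = \{(\alpha_1,\ldots,\alpha_r)\}$ as well (the field equations force all of $V(J+L)$ to lie in $\KK^r$, where it coincides with $V_\KK(J)$), the Nullstellensatz gives $J + L = \mathbf{I}(V(J+L)) = \mathbf{I}(\{(\alpha_1,\ldots,\alpha_r)\}) = I$. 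Thus $J + L = \langle x_1 - \alpha_1,\ldots,x_r-\alpha_r\rangle$.

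It then remains to observe that, for any monomial ordering on $P$, the set $\{x_1 - \alpha_1,\ldots,x_r - \alpha_r\}$ is the reduced \Gr\ basis of the ideal it generates. The leading monomials are $x_1,\ldots,x_r$, which are pairwise coprime, so all $S$-polynomials reduce to zero by Buchberger's first criterion; hence it is a \Gr\ basis. It is reduced because each non-leading term of $x_i - \alpha_i$ is the constant $-\alpha_i$, which is not divisible by any $x_j$, and the leading coefficients are $1$. Since this holds simultaneously for every monomial ordering, $G$ is the universal \Gr\ basis, proving the claim.

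The only genuinely delicate point is the justification that $J + L$ is radical and that its vanishing locus is the single $\KK$-rational point — but both of these are already packaged in Proposition~2.1, so once that is cited the argument is essentially formal; the rest is the standard (and routine) verification that a "triangular linear" generating set is a reduced \Gr\ basis in every ordering.
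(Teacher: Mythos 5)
Your argument is correct and follows exactly the route the paper intends: the paper gives no written proof, stating only that the result follows from the Nullstellensatz for finite fields via the preceding proposition (radicality of $J+L$ and $V(J+L)=V_\KK(J)$), which is precisely what you flesh out, together with the routine observation that $\{x_1-\alpha_1,\ldots,x_r-\alpha_r\}$ is a reduced \Gr\ basis for every monomial ordering.
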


Observe that \Gr\ bases have generally bad exponential complexity with respect to
the number of variables $r$ which is even worse than brute force complexity $\lambda = q^r$
(see, for instance, \cite{Ba}, Section 12.2). Nevertheless, if the number of variables $r$
is moderate and the polynomial system $(\ref{sys})$ has few $\KK$-solutions then
\Gr\ bases (in general symbolic computation) become an effective tool for solving it.
Indeed, by Proposition \ref{uniqueGB} we can choose most efficient monomial orderings
as DegRevLex to solve the system. Moreover, when there are no solutions and the \Gr\
basis is simply $G = \{1\}$ then the Buchberger algorithm is stopped once a constant
in $\KK$ is obtained as an element of the current basis, say $H$. If there is a single
solution, another optimization consists in stopping the algorithm once each variable
$x_i$ ($1\leq i\leq r$) is obtained as the leading monomial of an element in $H$.

In the case $V_\KK(J)$ consists of few $\KK$-solutions, note that the cost of obtaining
them is again essentially that of computing a DegRevLex-\Gr\ basis of $J + L$.
In fact, for solving one needs to convert this basis into a Lex-\Gr\ basis by means
of the FGLM-algorithm \cite{FGLM} which has complexity $\O(r d^3)$ where $d =
\# V_\KK(J) = \dim_\KK P/(J + L)$. If the integer $d$ is small, such complexity
is dominated by the cost of computing the DegRevLex-\Gr\ basis.

To simplify notations and statements, from now on we assume that the polynomial system
($\ref{sys}$) has a single $\KK$-solution, that is, $V_\KK(J) = \{(\alpha_1,\ldots,\alpha_r)\}$.
A standard way to reduce the complexity of solving a polynomial system with a large
number of variables consists in solving equivalently many systems having less variables
that are obtained by evaluating some subset of variables, say $\{x_1,\ldots,x_s\}$
($0\leq s\leq r$), in all possible ways over the finite field $\KK$. In other words,
for all vectors $(\beta_1,\ldots,\beta_s)\in\KK^s$ one defines the linear ideal
\[
E_{\beta_1,\ldots,\beta_s} = \langle x_1 - \beta_1, \ldots, x_s - \beta_s\rangle
\subset P
\]
and consider the corresponding ideal
\[
J_{\beta_1,\ldots,\beta_s} = J + L + E_{\beta_1,\ldots,\beta_s}.
\]
Moreover, we denote by
\begin{equation*}
\begin{gathered}
H_{\beta_1,\ldots,\beta_s} = \{
f_1(\beta_1,\ldots,\beta_s,x_{s+1},\ldots,x_r),\ldots,
f_n(\beta_1,\ldots,\beta_s,x_{s+1},\ldots,x_r), \\
x_{s+1}^q - x_{s+1},\ldots,x_r^q - x_r, x_1 - \beta_1, \ldots, x_s - \beta_s \}
\subset P
\end{gathered}
\end{equation*}
the generating set of $J_{\beta_1,\ldots,\beta_s}$ and we assume that
$H_{\beta_1,\ldots,\beta_s}$ can be computed in a negligible time.
One has clearly that
\[
V(J_{\beta_1,\ldots,\beta_s}) =
\left\{
\begin{array}{cl}
\{(\alpha_1,\ldots,\alpha_r)\} & \mbox{if}\
(\beta_1,\ldots,\beta_s) = (\alpha_1,\ldots,\alpha_s), \\
\emptyset & \mbox{otherwise}.
\end{array}
\right.
\]
This approach is generally called a {\em guess-and-determine $($or hybrid \cite{BFP}$)$
strategy} which has sequential running time
\[
\mu_s = \sum_{(\beta_1,\ldots,\beta_s)\in\KK^s} \tau_{\beta_1,\ldots,\beta_s}
\]
where $\tau_{\beta_1,\ldots,\beta_s}$ is the time for computing a DegRevLex-\Gr\ basis
of $J_{\beta_1,\ldots,\beta_s}$ starting with the generating set
$H_{\beta_1,\ldots,\beta_s}$. Denote by $\tau_s$ the average runtime of such
a computation, that is
\[
\tau_s = (\sum \tau_{\beta_1,\ldots,\beta_s}) / q^s.
\]
For a sufficiently large number $0\leq s\leq r$ of evaluated variables, one has generally
that $\tau_s\leq q^{r-s}$, that is, the total running time $\mu_s = q^s \tau_s$ of a
guess-and-determine strategy improves the brute force complexity. This motivates the use
of \Gr\ bases. An optimization of this strategy is achieved for a choice of the subset
of variables $\{x_1,\ldots,x_s\}$ such that $\mu_s$ is minimal. Obviously, if $r$ is a
large number of variables the search for this optimal choice may be not a trivial task.

Observe that for all $q^s - 1$ vectors $(\beta_1,\ldots,\beta_s)\neq (\alpha_1,\ldots,\alpha_s)$
the computed \Gr\ basis is always $G = \{1\}$, that is, the average computing time $\tau_s$
is essentially obtained for inconsistent polynomial systems where \Gr\ bases behave
generally better than SAT solvers. This is another motivation for using \Gr\ bases
in a guess-and-determine strategy. The superiority of \Gr\ bases can be explained
by observing that the UNSAT case is obtained by a SAT solver essentially by exploring
the full space. On the other hand, a symbolic method proves the inconsistency
of a system of equations by constructing some consequence of them of type $1 = 0$.
For an experimental evidence of this, see for instance \cite{LST} and Section 8
of the present paper where we show that \Gr\ bases perform better than SAT solver
and Binary Decision Diagrams when attacking the stream cipher \E0.

Note finally that if $V(J_{\beta_1,\ldots,\beta_s})$ contains many solutions,
one can determine their number without actually solving as
\[
\# V(J_{\beta_1,\ldots,\beta_s}) = \dim_\KK P/J_{\beta_1,\ldots,\beta_s}.
\]
This $\KK$-dimension of the quotient algebra $P/J_{\beta_1,\ldots,\beta_s}$
can be easily computed by means of a DegRevLex-\Gr\ basis of the ideal
$J_{\beta_1,\ldots,\beta_s}$ (see, for instance, \cite{GP,KR}).

\section{Difference stream ciphers}

Aiming to apply a guess-and-determine strategy to the cryptanalysis of the Bluetooth
stream cipher \E0, we review briefly here the theory of difference stream ciphers
and their algebraic attacks. Indeed, we will show in Section 5 that \E0 is such a cipher.
For all details we refer to the recent paper \cite{LST}.

Let $\KK$ be any field and fix an integer $n > 0$. Consider a set of variables
$X(t) = \{x_1(t),\ldots,x_n(t)\}$, for all integers $t\geq 0$ and put
$X = \bigcup_{t\geq 0} X(t)$. For the corresponding polynomial algebra $R = \KK[X]$
we consider the algebra endomorphism $\sigma: R\to R$ such that $x_i(t)\mapsto x_i(t+1)$,
for all $1\leq i\leq n$ and $t\geq 0$. The algebra $R$ under the action of $\sigma$
is called the {\em algebra of (ordinary) difference polynomials}.
Fix now some integers $r_1,\ldots,r_n\geq 0$ and define the subset
\[
\bX = \{x_1(0),\ldots,x_1(r_1-1),\ldots,x_n(0),\ldots,x_n(r_n-1)\}\subset X.
\]
We finally denote by $\bR = \KK[\bX]\subset R$ the corresponding subalgebra.

\begin{definition}
Consider some polynomials $f_1,\ldots,f_n\in\bR$. A {\em system of (algebraic ordinary)
explicit difference equations} is by definition an infinite system of polynomial
equations of the kind
\begin{equation*}
\left\{
\begin{array}{ccc}
x_1(r_1 + t) & = & \sigma^t(f_1), \\
& \vdots \\
x_n(r_n + t) & = & \sigma^t(f_n). \\
\end{array}
\right.
\quad (t\geq 0)
\end{equation*}
over the infinite set of variables $X$. Such a system is denoted briefly as
\begin{equation}
\label{dsys}
\left\{
\begin{array}{ccc}
x_1(r_1) & = & f_1, \\
& \vdots \\
x_n(r_n) & = & f_n. \\
\end{array}
\right.
\end{equation}
An $n$-tuple of functions $(a_1,\ldots,a_n)$ where each $a_i:\NN\to\KK$ $(1\leq i\leq n)$
satisfies the above system is called a {\em $\KK$-solution of $(\ref{dsys})$}.
Put $r = r_1 + \ldots + r_n$ and for all $t\geq 0$ define the vector
\[
v(t) = (a_1(t),\ldots,a_1(t+r_1-1),\ldots,a_n(t),\ldots,a_n(t+r_n-1))\in\KK^r.
\]
We call $v(t)$ the {\em $t$-state of the $\KK$-solution $(a_1,\ldots,a_n)$}.
In particular, $v(0)$ is called its {\em initial state}.
\end{definition}

\begin{definition}
Consider an explicit difference system $(\ref{dsys})$. We define the algebra
endomorphism $\bT:\bR \to \bR$ by putting, for any $i = 1,2,\dots,n$
\[
x_i(0)\mapsto x_i(1),\ldots,x_i(r_i-2)\mapsto x_i(r_i-1),
x_i(r_i-1)\mapsto f_i.
\]
If $r = r_1 + \ldots + r_n$, we denote by $\T:\KK^r\to \KK^r$ the polynomial map
corresponding to $\T$. In other words, if $v(t)$ is the $t$-state
of a $\KK$-solution $(a_1,\ldots,a_n)$ we have that $\T(v(t)) = v(t+1)$, for all clocks $t\geq 0$.
We call $\bT$ the {\em state transition endomorphism} and $\T$ the {\em state transition map
of the explicit difference system $(\ref{dsys})$}.
\end{definition}

Since $v(t) = \T^t(v(0))$, it is clear that the $\KK$-solutions $(a_1,\ldots,a_n)$
of (\ref{dsys}) are in one-to-one correspondence with their initial states
\[
v(0) = (a_1(0),\ldots,a_1(r_1-1),\ldots,a_n(0),\ldots,a_n(r_n-1)).
\]

An important class of explicit difference systems are the ones such that
for any $t\leq t'$ we can compute a $t$-state by the knowledge of a $t'$-state.

\begin{definition}
Consider the state transition endomorphism $\bT:\bR\to\bR$ and the corresponding state
transition map $\T:\KK^r\to\KK^r$ of an explicit difference system $(\ref{dsys})$.
We call the system {\em invertible} if $\bT$ is an automorphism. In this case,
$\T$ is also a bijective map. 
\end{definition}

An invertibility criterion for endomorphisms of polynomial algebras can be obtained
in terms of symbolic computation and \Gr\ bases. For a comprehensive reference we refer
to the book \cite{VDE}.

\begin{theorem}
\label{invth}
Let $X = \{x_1,\ldots,x_r\}, X' = \{x'_1,\ldots,x'_r\}$ be two disjoint variable sets
and define the polynomial algebras $P = \KK[X], P' = \KK[X']$ and $Q = \KK[X\cup X'] =
P\otimes P'$. Consider an algebra endomorphism $\varphi:P\to P$ such that
$x_1\mapsto g_1,\ldots,x_r\mapsto g_r$ $(g_i\in P)$ and the corresponding ideal
$J\subset Q$ which is generated by the set $\{x'_1 - g_1,\ldots, x'_r - g_r\}$.
Moreover, we endow the polynomial algebra $Q$ by a product monomial ordering
such that $X\succ X'$. Then, the map $\varphi$ is an automorphism of $P$ if and only if
the reduced \Gr\ basis of $J$ is of the kind $\{x_1 - g'_1,\ldots,x_r - g'_r\}$
where $g'_i\in P'$, for all $1\leq i\leq r$. In this case, if $\varphi':P'\to P'$
is the algebra endomorphism such that $x'_1\mapsto g'_1, \ldots, x'_r\mapsto g'_r$
and $\xi: P\to P'$ is the isomorphism $x_1\mapsto x'_1, \ldots, x_r\mapsto x'_r$,
we have that $\xi\, \varphi^{-1}  = \varphi'\, \xi$.
\end{theorem}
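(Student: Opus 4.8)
The plan is to interpret the ideal $J = \langle x'_1 - g_1,\ldots, x'_r - g_r\rangle \subset Q$ as the vanishing ideal of the graph of $\varphi$, and to translate the algebraic statement ``$\varphi$ is an automorphism'' into a statement about the shape of the reduced \Gr\ basis of $J$ with respect to an elimination-type product ordering with $X\succ X'$. First I would record the basic fact that $Q/J \cong P$ via the map sending $x_i\mapsto x_i$ and $x'_i\mapsto g_i$; indeed the generators of $J$ let one rewrite every $x'_i$ in terms of $X$, so $Q/J$ is generated by the images of $X$ and the relations are exactly those holding among the $x_i$ in $P$, namely none. Thus $Q/J\cong P$ is a polynomial ring in $r$ variables, and under this isomorphism the subalgebra generated by the images of $X'$ is precisely $\KK[g_1,\ldots,g_r] = \Im\varphi$.

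Next I would analyze the reduced \Gr\ basis $G$ of $J$ for the chosen product ordering. Since $J$ already contains the $r$ elements $x'_i - g_i$ whose leading terms (because $X\succ X'$) are the top-degree parts of the $g_i$ in the $X$-variables — more precisely, for each $i$ the leading monomial of $x'_i - g_i$ is $\lm(g_i)$ if $g_i\notin P'$, and is $x'_i$ if $g_i\in\KK$ — the structure of $G$ is controlled by how the $\lm(g_i)$ interact. The key observation is that $Q/J$, being isomorphic to a polynomial ring $P$ in the $r$ variables $X$, has a \Gr\ basis of $J$ whose set of leading monomials $\Mon(Q)\setminus\mathrm{in}(J)$ is in bijection with a $\KK$-basis of $P$; I would use this to argue that $\mathrm{in}(J)$ must ``eliminate'' all of the $X$-variables exactly when $\Im\varphi = P$. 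Concretely: $\varphi$ is surjective $\iff$ each $x_i\in\KK[g_1,\ldots,g_r]$ $\iff$ in $Q/J$ each $x_i$ lies in the subalgebra generated by the $x'_j$ $\iff$ there is, for each $i$, an element of $J$ of the form $x_i - h_i(X')$ with $h_i\in P'$; reducing such elements and using that $X\succ X'$ forces the reduced \Gr\ basis to contain elements with leading monomials $x_1,\ldots,x_r$. Conversely, if $G = \{x_1 - g'_1,\ldots,x_r - g'_r\}$ with $g'_i\in P'$, then modulo $J$ every $x_i$ equals $g'_i(x'_1,\ldots,x'_r) = g'_i(g_1,\ldots,g_r)$ in $P$, so $\varphi$ is surjective; and a surjective endomorphism of a finitely generated commutative algebra over a field that is a domain of finite transcendence degree (here a polynomial ring) is automatically injective, hence an automorphism — this is the standard Artin–Tate / Ax–Grothendieck-type fact, and I would cite \cite{VDE} for it rather than reprove it.

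Finally, for the last sentence I would verify the identity $\xi\,\varphi^{-1} = \varphi'\,\xi$ by evaluating both sides on the generators $x_i$ of $P$. On one hand $\xi\varphi^{-1}(x_i)$: since $x_i \equiv g'_i(X')\pmod J$ and the isomorphism $Q/J\cong P$ identifies $x'_j$ with $g_j = \varphi(x_j)$, we get $x_i = g'_i(\varphi(x_1),\ldots,\varphi(x_r)) = \varphi(g'_i(x_1,\ldots,x_r))$ in $P$, hence $\varphi^{-1}(x_i) = g'_i(x_1,\ldots,x_r)$ and so $\xi\varphi^{-1}(x_i) = g'_i(x'_1,\ldots,x'_r)$. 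On the other hand $\varphi'\xi(x_i) = \varphi'(x'_i) = g'_i$ as an element of $P'$, which is the same polynomial $g'_i(x'_1,\ldots,x'_r)$. Since the two endomorphisms $P\to P'$ agree on generators, they are equal.

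The main obstacle I anticipate is making the ``shape of the reduced \Gr\ basis'' argument fully rigorous in \emph{both} directions simultaneously — in particular, showing that surjectivity of $\varphi$ forces each $x_i$ to appear as a leading monomial of a reduced \Gr\ basis element (and that no \emph{other} elements survive in the reduced basis), which requires carefully exploiting that $Q/J$ is a polynomial ring of Krull dimension exactly $r$ together with the elimination property $X\succ X'$ of the product ordering. The injectivity-from-surjectivity step, while classical, also deserves a precise citation since it is what upgrades ``surjective'' to ``automorphism.''
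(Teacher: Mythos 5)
The paper does not actually prove Theorem \ref{invth}: it is quoted as a known invertibility criterion, with the reader referred to \cite{VDE} for the details. So your argument is not competing with an in-paper proof; it is essentially the standard proof of this criterion, and its skeleton is correct: the identification $Q/J\cong P$ sending $x'_i\mapsto g_i$, the translation of surjectivity of $\varphi$ into the membership $x_i-h_i(X')\in J$ with $h_i\in P'$, the converse direction (basis of the stated shape $\Rightarrow$ surjectivity $\Rightarrow$ injectivity $\Rightarrow$ automorphism), and the verification of $\xi\,\varphi^{-1}=\varphi'\,\xi$ on the generators $x_i$ are all right.

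Two points need tightening. First, the gap you yourself flag --- why the reduced basis contains \emph{nothing besides} the $r$ elements with leading monomials $x_1,\ldots,x_r$ --- has a one-line resolution that is lighter than the Krull-dimension route you sketch: any element of $J$ whose leading monomial is divisible by no $x_i$ must, by the elimination property of the product order $X\succ X'$, lie in $J\cap P'$; and $J\cap P'$ is the kernel of the evaluation $P'\to Q/J\cong P$, $x'_i\mapsto g_i$, i.e.\ (up to the relabeling $\xi$) the kernel of $\varphi$, which vanishes because $\varphi$ is injective. Hence $\mathrm{in}(J)=\langle x_1,\ldots,x_r\rangle$ exactly, and reducedness forces each tail $g'_i$ to be supported on standard monomials, so $g'_i\in P'$. (Your dimension argument, via $\dim Q/J=\dim Q/\mathrm{in}(J)=r$, also works but is heavier than needed.) Second, the ``surjective $\Rightarrow$ injective'' step is not Ax--Grothendieck, which goes in the opposite direction (injective polynomial maps of varieties are surjective); what you actually need is either the elementary fact that a surjective endomorphism of a Noetherian commutative ring is injective (stabilize the chain $\Ker\varphi\subset\Ker\varphi^2\subset\cdots$), or, as your parenthetical about transcendence degree suggests, that $\KK[g_1,\ldots,g_r]=P$ forces $g_1,\ldots,g_r$ to be algebraically independent, whence $\Ker\varphi=0$. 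Either closes the step; just attach the correct fact to it.
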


Based on the above result, we introduce the following notion for invertible systems.

\begin{definition}
\label{invsys}
Denote $\bR' = \KK[\bX']$ where
\[
\bX' =
\{x'_1(0),\ldots,x'_1(r_1-1),\ldots,x'_n(0),\ldots,x'_n(r_n-1)\}
\]
and put $Q = \KK[\bX\cup\bX'] = \bR\otimes \bR'$. Consider an invertible system $(\ref{dsys})$
and the corresponding ideal $J\subset Q$ which is generated by the following polynomials,
for any $i = 1,2,\ldots,n$
\[
x'_i(0) - x_i(1),\ldots,x'_i(r_i-2) - x_i(r_i-1),x'_i(r_i-1) - f_i.
\]
Assume that $Q$ is endowed with a product monomial ordering such that $\bX\succ \bX'$
and let
\[
G = \bigcup_i
\{x_i(1) - x'_i(0),\ldots,x_i(r_i-1) - x'_i(r_i-2), x_i(0) - f'_i\}
\]
be the reduced \Gr\ basis of $J$. Denote by $g_i$ the image of $f'_i$ under the algebra
isomorphism $\bR'\to\bR$ such that, for any $i = 1,2,\ldots,n$
\[
x'_i(0)\mapsto x_i(r_i-1), x'_i(1)\mapsto x_i(r_i-2), \ldots
x'_i(r_i-1)\mapsto x_i(0).
\]
The {\em inverse of an invertible system $(\ref{dsys})$} is by definition
the following explicit difference system
\begin{equation}
\label{invdsys}
\left\{
\begin{array}{ccc}
x_1(r_1) & = & g_1, \\
& \vdots \\
x_n(r_n) & = & g_n. \\
\end{array}
\right.
\end{equation}
\end{definition}

The following results are proved in \cite{LST}.

\begin{proposition}
Let $\bT,\bS:\bR\to\bR$ be the state transition automorphisms of an invertible
system (\ref{dsys}) and its inverse system (\ref{invdsys}), respectively.
Denote by $\xi:\bR\to\bR$ the algebra automorphism such that
\[
x_i(0)\mapsto x_i(r_i-1), x_i(1)\mapsto x_i(r_i-2),\ldots,
x_i(r_i-1)\mapsto x_i(0).
\]
One has that $\xi \bS = \bT^{-1} \xi$.
\end{proposition}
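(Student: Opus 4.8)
The plan is to prove the identity $\xi\,\bS = \bT^{-1}\,\xi$ by evaluating both sides, which are algebra endomorphisms of $\bR$, on the generators $x_i(j)$ ($1\le i\le n$, $0\le j\le r_i-1$). The first step is to make $\bT^{-1}$ explicit. Since the system $(\ref{dsys})$ is invertible, $\bT$ is an automorphism, so Theorem~\ref{invth} applies to $\varphi = \bT:\bR\to\bR$ and to the ideal $J\subset Q$ of Definition~\ref{invsys}: it guarantees that the reduced \Gr\ basis of $J$ (for the product ordering with $\bX\succ\bX'$) has exactly the shape recorded as $G$ in Definition~\ref{invsys}, and that $\iota\,\bT^{-1} = \varphi'\,\iota$, where $\iota:\bR\to\bR'$ denotes the renaming isomorphism $x_i(j)\mapsto x'_i(j)$ (the map called $\xi$ in Theorem~\ref{invth}) and $\varphi':\bR'\to\bR'$ is read off from $G$ as $x'_i(j)\mapsto x'_i(j-1)$ for $1\le j\le r_i-1$ and $x'_i(0)\mapsto f'_i$. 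Composing these, I get the closed formulas $\bT^{-1}(x_i(j)) = x_i(j-1)$ for $1\le j\le r_i-1$ and $\bT^{-1}(x_i(0)) = \iota^{-1}(f'_i)$.

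The second step is to read $\bS$ off the inverse system $(\ref{invdsys})$: by the definition of the state transition endomorphism, $\bS(x_i(j)) = x_i(j+1)$ for $0\le j\le r_i-2$ and $\bS(x_i(r_i-1)) = g_i = \eta(f'_i)$, where $\eta:\bR'\to\bR$ is the isomorphism $x'_i(j)\mapsto x_i(r_i-1-j)$ introduced in Definition~\ref{invsys}. The one genuinely bookkeeping-sensitive observation is how $\iota$, $\eta$ and $\xi$ interlock: straight from the definitions one has $\eta = \xi\circ\iota^{-1}$, and $\xi$ is an involution, $\xi^{2} = \id$. Together with the fact that $\iota$ is a coefficient-fixing ring isomorphism, these are the only facts needed beyond the explicit formulas above.

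The final step is the case check. Evaluating at $x_i(j)$ one compares $\xi(\bS(x_i(j)))$ with $\bT^{-1}(\xi(x_i(j))) = \bT^{-1}(x_i(r_i-1-j))$. If $j\le r_i-2$ then $r_i-1-j\ge 1$, and both sides equal $x_i(r_i-2-j)$ — the left because $\xi(x_i(j+1)) = x_i(r_i-2-j)$, the right because $\bT^{-1}$ lowers the clock by one; this case is nothing but index arithmetic. If $j = r_i-1$ then the left side is $\xi(g_i) = \xi(\xi(\iota^{-1}(f'_i))) = \iota^{-1}(f'_i)$, using $\eta = \xi\circ\iota^{-1}$ and $\xi^{2}=\id$, while the right side is $\bT^{-1}(x_i(0)) = \iota^{-1}(f'_i)$; so the two agree. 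Since the endomorphisms coincide on all generators, $\xi\,\bS = \bT^{-1}\,\xi$, and $\bS = \xi^{-1}\bT^{-1}\xi$ is then automatically an automorphism, consistent with the statement. I expect the only real obstacle to be the $j = r_i-1$ case: one must not confuse the three isomorphisms $\iota$, $\eta$, $\xi$ and must use precisely the relation $\eta = \xi\circ\iota^{-1}$, which encodes how $g_i$ was built from the \Gr\ basis output $f'_i$; the remaining cases are the routine shift bookkeeping above.
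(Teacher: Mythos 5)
Your proof is correct. Note that the paper itself does not prove this proposition --- it is stated with the remark that the results are proved in the cited reference \cite{LST} --- so there is no in-paper argument to compare against; but your verification is complete and is the natural one. The two key ingredients are exactly right: (i) extracting the closed form of $\bT^{-1}$ (clock shift down by one on $x_i(j)$ for $j\geq 1$, and $x_i(0)\mapsto\iota^{-1}(f'_i)$) from Theorem~\ref{invth} applied to the ideal of Definition~\ref{invsys}, and (ii) the bookkeeping identity $\eta=\xi\circ\iota^{-1}$ together with $\xi^2=\id$, which is precisely what makes the boundary case $j=r_i-1$ close up, since $\xi(g_i)=\xi(\eta(f'_i))=\iota^{-1}(f'_i)=\bT^{-1}(x_i(0))$. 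Checking equality of the two algebra endomorphisms on generators is sufficient, and your index arithmetic in the generic case $0\leq j\leq r_i-2$ is accurate.
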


By the above proposition we obtain immediately the following result
explaining how to practically reverse the evolution of the state
of an invertible system by using the corresponding inverse system.

\begin{proposition}
\label{invstat}
Let $(\ref{invdsys})$ be the inverse system of an invertible system $(\ref{dsys})$.
If $(a_1,\ldots,a_n)$ is a $\KK$-solution of $(\ref{dsys})$, consider
its $t$-state $(t\geq 0)$
\[
v = (a_1(t),\ldots,a_1(t+r_1-1),\ldots,a_n(t),\ldots,a_n(t+r_n-1)).
\]
Denote by $(b_1,\ldots,b_n)$ the $\KK$-solution of $(\ref{invdsys})$ whose
initial state is
\[
v' = (a_1(t+r_1-1),\ldots,a_1(t),\ldots,a_n(t+r_n-1),\ldots,a_n(t)).
\]
If the $t$-state of $(b_1,\ldots,b_n)$ is
\[
u' = (b_1(t),\ldots,b_1(t+r_1-1),\ldots,b_n(t),\ldots,b_n(t+r_n-1)),
\]
then the initial state of $(a_1,\ldots,a_n)$ is
\[
u = (b_1(t+r_1-1),\ldots,b_1(t),\ldots,b_n(t+r_n-1),\ldots,b_n(t)).
\]
\end{proposition}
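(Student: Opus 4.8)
The plan is to reduce the statement to a single identity relating the state transition maps of $(\ref{dsys})$ and $(\ref{invdsys})$, and then to derive that identity from the preceding proposition. Write $\T,\S:\KK^r\to\KK^r$ for the state transition maps of $(\ref{dsys})$ and $(\ref{invdsys})$ respectively; here $\T$ is bijective because $(\ref{dsys})$ is invertible. Let $\xi:\KK^r\to\KK^r$ denote also the map on states induced by the reversal automorphism $\xi:\bR\to\bR$ of the preceding proposition, that is, the involution reversing each of the $n$ coordinate blocks of a state, so that $\xi^2=\id$. Unwinding the block structure of the states, the vector $v'$ in the statement is exactly $\xi(v)$ and the vector $u$ is exactly $\xi(u')$. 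Since a $\KK$-solution is determined by its initial state, $(b_1,\ldots,b_n)$ is the unique $\KK$-solution of $(\ref{invdsys})$ with initial state $v'$, so its $t$-state is $u'=\S^t(v')=\S^t(\xi(v))$; and since the $t$-state of $(a_1,\ldots,a_n)$ satisfies $v=v(t)=\T^t(v(0))$, the initial state of $(a_1,\ldots,a_n)$ equals $\T^{-t}(v)$. Hence the assertion $u=v(0)$ is equivalent to the identity of maps $\xi\,\S^t\,\xi=\T^{-t}$, equivalently $\S^t\xi=\xi\,\T^{-t}$ (using $\xi^2=\id$), for all $t\geq 0$.

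Next I would obtain the case $t=1$ straight from the preceding proposition, which states $\xi\bS=\bT^{-1}\xi$ as endomorphisms of $\bR$. Translating this equality of algebra endomorphisms of $\bR$ into the corresponding equality of polynomial maps on $\KK^r$ — recalling that passing to associated maps reverses the order of composition, that $\bT^{-1}$ corresponds to $\T^{-1}$, and that the map associated to $\xi$ is the block reversal $\xi$ itself — gives $\S\xi=\xi\,\T^{-1}$ on $\KK^r$.

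The last step is to iterate. An immediate induction on $t$ from $\S\xi=\xi\,\T^{-1}$ yields $\S^t\xi=\xi\,\T^{-t}$ for every $t\geq 0$, the inductive step being $\S^{t+1}\xi=\S(\S^t\xi)=\S(\xi\,\T^{-t})=(\S\xi)\T^{-t}=(\xi\,\T^{-1})\T^{-t}=\xi\,\T^{-(t+1)}$. Then
\[
u=\xi(u')=\xi\bigl(\S^t(\xi(v))\bigr)=(\xi\,\S^t\,\xi)(v)=(\xi\,\xi\,\T^{-t})(v)=\T^{-t}(v)=v(0),
\]
which is the initial state of $(a_1,\ldots,a_n)$, as claimed. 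There is no serious obstacle here, the proposition being essentially a corollary of the preceding one; the only point requiring care is to match the vectors $v'$, $u'$, $u$ correctly with the maps $\xi$, $\S$, $\T$ — in particular to keep the order of composition straight when passing from the endomorphism identity of the preceding proposition to the identity of maps.
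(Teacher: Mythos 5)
Your proposal is correct and follows essentially the same route as the paper: the paper gives no explicit proof, asserting that the result follows immediately from the preceding proposition $\xi\,\bS=\bT^{-1}\xi$, and your argument is precisely that derivation made explicit (translating the endomorphism identity into the map identity $\S\,\xi=\xi\,\T^{-1}$ with the composition order reversed, iterating to $\S^t\xi=\xi\,\T^{-t}$, and conjugating by the involutive block reversal). The bookkeeping of $v'=\xi(v)$, $u=\xi(u')$, and $v(0)=\T^{-t}(v)$ is all handled correctly.
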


From now on, we will assume that $\KK = \GF(q)$ is a finite field.

\begin{definition}
A {\em difference stream cipher} $\cC$ is by definition an explicit difference
system $(\ref{dsys})$ together with a polynomial $f\in\bR$. Let $(a_1,\ldots,a_n)$
be a $\KK$-solution of $(\ref{dsys})$ and denote as usual by $v(t)\in\KK^r$
$(r = r_1 + \ldots + r_n)$ its $t$-state. The initial state $v(0)$ is called
the {\em key of the $\KK$-solution $(a_1,\ldots,a_n)$} and the function $b:\NN\to\KK$
such that $b(t) = f(v(t))$ for all $t\geq 0$, is called the {\em keystream
of $(a_1,\ldots,a_n)$}. We call $f$ the {\em keystream polynomial of the cipher $\cC$}.
Finally, the cipher $\cC$ is said {\em invertible} if such is the system $(\ref{dsys})$.
\end{definition}

A ``known-plaintext attack'' to a stream cipher essentially implies the knowledge
of the keystream as the difference between the known ciphertext and plaintext streams.
Indeed, the keystream is usually provided after a sufficiently high number
of clocks in order to prevent cryptanalysis. This motivates the following notion.

\begin{definition}
\label{stream}
Let $\cC$ be a difference stream cipher consisting of the system $(\ref{dsys})$
and the keystream polynomial $f$. Let $b:\NN\to\KK$ be the keystream of a
$\KK$-solution of $(\ref{dsys})$ and fix a clock $T\geq 0$. Consider the ideal
\[
J = \sum_{t\geq T}\, \langle \sigma^t(f) - b(t) \rangle\subset R
\]
and denote by $V_\KK(J)$ the set of the $\KK$-solutions of $J$. An {\em algebraic attack
to $\cC$ by the keystream $b$ after $T$ clocks} consists in computing the $\KK$-solutions
$(a_1,\ldots,a_n)$ of the system $(\ref{dsys})$ such that $(a_1,\ldots,a_n)\in V_\KK(J)$.
In other words, by considering the ideal corresponding to $(\ref{dsys})$, that is
\[
I = \sum_{t\geq 0}\, \langle x_1(r_1+t) - \sigma^t(f_1),\ldots,x_n(r_n+t) - \sigma^t(f_n)
\rangle \subset R
\]
we want to compute $V_\KK(I + J) = V_\KK(I)\cap V_\KK(J)$.
\end{definition}

Since the given function $b$ is the keystream of a $\KK$-solution of (\ref{dsys}),
say $(a_1,\ldots,a_n)$, we have clearly that $(a_1,\ldots,a_n)\in V_\KK(I + J)\neq\emptyset$.
For actual ciphers, we have generally that $V_\KK(I + J) = \{(a_1,\ldots,a_n)\}$.
We will assume such a unique solution from now on.

In practice, a finite number of values of the keystream $b$ is actually provided in algebraic
attacks. In other words, for a fixed integer bound $B\geq T$, we consider the polynomial algebra
$R_B = \KK[X_B]$ over the finite variable set $X_B = \bigcup_{0\leq t\leq B} X(t)$
and the ideals $I_B,J_B\subset R_B$ whose finite generating sets are respectively
\begin{equation*}
\begin{gathered}
\{x_i(r_i+t) - \sigma^t(f_i)\in R_B\mid 1\leq i\leq n,t\geq 0\}, \\
\{\sigma^t(f) - b(t)\in R_B\mid t\geq T\}.
\end{gathered}
\end{equation*}
It is shown in \cite{LST} that for a sufficiently large bound $B$, the uniqueness
of the $\KK$-solution is preserved and we have that
\[
V_\KK(I_B + J_B) = \{(a'_1,\ldots,a'_n)\}
\]
where each function $a'_i:\{0,\ldots,B\}\to\KK$ ($1\leq i\leq n$) is such that $a'_i(t) = a_i(t)$
for all $0\leq t\leq B$. Since $\KK$-solutions are in one-to-one correspondence with their
initial states that are the keys of a difference stream cipher, this means that
we can perform an actual algebraic attack without obtaining any spurious, that is,
incorrect key once a sufficiently large number of keystream values is provided.

If the explicit difference system (\ref{dsys}) is invertible we can essentially
assume that the initial keystream clock is $T = 0$. In fact, by means of the notion
of inverse system in Definition \ref{invsys} the computation of the $T$-state
is completely equivalent to the computation of the initial state, that is, the key.
Moreover, if we consider the ideal
\[
J' = \sum_{t\geq 0}\, \langle \sigma^t(f) - b(T+t) \rangle\subset R
\]
we have that the initial states of the $\KK$-solutions in $V_\KK(I + J')$ are exactly
the $T$-states of the $\KK$-solutions in $V_\KK(I + J)$. By putting $B' = B - T$,
an algebraic attack to an invertible difference stream cipher is therefore reduced to
the computation of $V_\KK(I_{B'} + J'_{B'})$. Since $T$ is generally an high value
of clock, for invertible ciphers this is a very effective optimization of the algebraic
cryptanalysis because this trick reduces drastically the number of variables to solve.
Indeed, instead of solving equations in the polynomial algebra $R_B$ ($0\leq T\leq B$)
we can solve equivalent equations in $R_{B'}$ ($B' = B - T$).


Another possible option to reduce the number of variables may consist in eliminating
all variables except for the initial variables, that is, the variables of the set $\bX$
by means of the explicit difference equations of the system (\ref{dsys}). In this way
one obtains the equations satisfied by the keys of a difference stream cipher having
a given keystream. To explain this, we introduce the following notions (see also
\cite{GLS,LS}).

\begin{definition}
A commutative algebra $A$ together with an algebra endomorphism $\alpha:A\to A$ is called
a {\em difference algebra}. An ideal $J\subset A$ such that $\alpha(J)\subset J$ is said
a {\em difference ideal}. Let $(A,\alpha), (B,\beta)$ be two difference algebras.
An algebra homomorphism $\varphi:A\to B$ is called a {\em difference algebra homomorphism}
if $\varphi \alpha = \beta \varphi$.
\end{definition}

Note that the kernel of a difference algebra homomorphism is clearly a difference ideal.
In our context, we consider the difference algebras $(R,\sigma)$ and $(\bR,\bT)$. The ideal
$I = \sum_{t\geq 0}\, \langle x_1(r_1+t) - \sigma^t(f_1),\ldots,x_n(r_n+t) - \sigma^t(f_n) \rangle$
is a difference ideal of $R$. We denote hence
$I = \langle x_1(r_1) - f_1,\ldots,x_n(r_n) - f_n \rangle_\sigma$ to mean that the set
$\{x_1(r_1) - f_1,\ldots,x_n(r_n) - f_n\}$ generates $I$ as a difference ideal.

\begin{theorem}
Let $\varphi:R\to\bR$ be the algebra homomorphism such that its restriction to $\bR$
is the identity map and for all $t\geq 0$ one has that
\[
x_i(r_i + t)\mapsto \bT^t(f_i)\ (1\leq i\leq n).
\]
It holds that $\varphi$ is a difference algebra homomorphism and its kernel is $I = \ker\varphi$.
\end{theorem}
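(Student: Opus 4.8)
The plan is to verify three things in sequence: that $\varphi$ is well-defined, that it is a difference algebra homomorphism, and that $\ker\varphi = I$. For well-definedness, recall that $R = \KK[X]$ is a polynomial algebra on the variable set $X = \bigcup_{t\geq 0} X(t)$, so an algebra homomorphism out of $R$ is specified freely by its values on the generators $x_i(t)$. The restriction to $\bR$ being the identity fixes the values on the finitely many variables in $\bX$, and the displayed formula $x_i(r_i+t)\mapsto \bT^t(f_i)$ specifies the rest; since the two prescriptions involve disjoint sets of generators of $R$ (the variables $x_i(j)$ with $j < r_i$ versus those with $j\geq r_i$), they patch together to a unique $\varphi: R\to\bR$. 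No consistency condition is needed here precisely because $R$ is free.

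Next I would check the intertwining relation $\varphi\,\sigma = \bT\,\varphi$. Since both sides are algebra homomorphisms $R\to\bR$, it suffices to check equality on the generators $x_i(t)$. Split into two cases. For $t < r_i$ we have $\sigma(x_i(t)) = x_i(t+1)$; if $t+1 < r_i$ then $\varphi(x_i(t+1)) = x_i(t+1)$ and $\bT(\varphi(x_i(t))) = \bT(x_i(t)) = x_i(t+1)$ by the definition of $\bT$, while if $t+1 = r_i$ then $\varphi(x_i(r_i)) = \bT^0(f_i) = f_i$ and $\bT(x_i(r_i-1)) = f_i$, again by the definition of $\bT$; either way the two sides agree. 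For $t\geq r_i$, write $t = r_i + s$ with $s\geq 0$; then $\varphi(\sigma(x_i(r_i+s))) = \varphi(x_i(r_i+s+1)) = \bT^{s+1}(f_i)$, whereas $\bT(\varphi(x_i(r_i+s))) = \bT(\bT^s(f_i)) = \bT^{s+1}(f_i)$. Hence $\varphi$ is a difference algebra homomorphism, and in particular $\ker\varphi$ is a difference ideal of $R$.

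For the kernel computation, the inclusion $I\subseteq\ker\varphi$ is immediate: each generator $x_i(r_i+t) - \sigma^t(f_i)$ maps to $\bT^t(f_i) - \varphi(\sigma^t(f_i))$, and since $\varphi$ is a difference homomorphism fixing $\bR$ pointwise, $\varphi(\sigma^t(f_i)) = \bT^t(\varphi(f_i)) = \bT^t(f_i)$, so the image is $0$; as $I$ is generated by these elements it lies in the kernel. For the reverse inclusion $\ker\varphi\subseteq I$, the key observation is that modulo $I$ every variable $x_i(r_i+t)$ is congruent to $\sigma^t(f_i)\in R$, and iterating this rewriting (each step lowers the total number of ``high'' variables, or more carefully, replaces a high variable by a polynomial in strictly lower-indexed variables) one reduces any $g\in R$ modulo $I$ to a polynomial $\bar g\in\bR$; this reduction is exactly $\varphi(g)$, because $\varphi$ fixes $\bR$ and sends each $x_i(r_i+t)$ to precisely the polynomial used in the rewriting, namely $\bT^t(f_i)$ — here one uses that for $f_i\in\bR$, the substitution map computing $\bT^t(f_i)$ agrees with repeatedly applying the relations $x_i(r_i) = f_i$ inside $R/I$. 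Thus $g\equiv\varphi(g)\pmod I$, so if $g\in\ker\varphi$ then $g\equiv 0\pmod I$, i.e.\ $g\in I$.

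The main obstacle is the reverse inclusion, and specifically making rigorous the claim that the ``reduction modulo $I$ to a polynomial in $\bX$'' is well-defined and coincides with $\varphi$. One clean way to do this, avoiding any termination subtlety, is to argue directly: $\varphi$ is surjective (it restricts to the identity on $\bR$), so it induces an isomorphism $R/\ker\varphi\xrightarrow{\sim}\bR$; meanwhile the composite $\bR\hookrightarrow R\twoheadrightarrow R/I$ is surjective (every generator $x_i(r_i+t)$ of $R$ is hit modulo $I$, by induction on $t$ using $x_i(r_i+t)\equiv\sigma^t(f_i)$ and that $\sigma^t(f_i)$ is, modulo $I$, a polynomial in lower variables), and it is injective because $\varphi$ provides a one-sided inverse: the composite $\bR\to R/I\to R/\ker\varphi\cong\bR$ is the identity since $I\subseteq\ker\varphi$ and $\varphi|_{\bR} = \id$. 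Hence $\bR\cong R/I$ compatibly with the surjection from $R$, forcing $I = \ker\varphi$. I would present the argument in this homomorphism-theorem form, as it sidesteps any explicit confluence-of-rewriting bookkeeping.
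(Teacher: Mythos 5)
Your proposal is correct and follows essentially the same route as the paper: verify $\varphi\,\sigma=\bT\,\varphi$ on generators by cases, kill the generators of $I$ under $\varphi$ to get $I\subseteq\ker\varphi$, and for the converse reduce any element modulo $I$ to a polynomial in $\bR$, where $\varphi$ acts as the identity. Your final repackaging via the homomorphism theorem (showing $\bR\to R/I$ is a surjection with left inverse induced by $\varphi$) is only a cosmetic variant of the paper's direct conclusion that $\bar{f}\in\ker\varphi\cap\bR$ forces $\bar{f}=0$, and it still rests on the same reduction-modulo-$I$ lemma that the paper also leaves at the level of an induction sketch.
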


\begin{proof}
Since $\varphi$ is an algebra homomorphism, it is sufficient to show that the property
$\varphi \sigma = \bT \varphi$ holds over the variables. For the variables $x_i(0),\ldots,x_i(r_i-2)$
($1\leq i\leq n$) this is trivial since the actions of $\sigma,\bT$ coincide and the restriction
of $\varphi$ to $\bR$ is the identity map. Moreover, we have
\[
\varphi(\sigma(x_i(r_i-1)) = \varphi(x_i(r_i)) = f_i = \bT(x_i(r_i-1)) = \bT(\varphi(x_i(r_i-1))).
\]
In the same way, for all $t > 0$ one has
\[
\varphi(\sigma(x_i(r_i-1 + t)) = \varphi(x_i(r_i + t)) = \bT^t(f_i) = \bT\bT^{t-1}(f_i)
= \bT(\varphi(x_i(r_i-1 + t))).
\]
We show now that $\Ker\varphi = I$. Since $I =
\langle x_1(r_1) - f_1,\ldots,x_n(r_n) - f_n \rangle_\sigma$
and $\varphi(x_i(r_i) - f_i) = f_i - f_i = 0$, we have that $I\subset\Ker\varphi$.

Consider now a polynomial $f\in R$. By means of the identities
$x_i(r_i + t)\equiv \sigma^t(f_i)$ modulo $I$ ($1\leq i\leq n, t\geq 0$) we have that
$f\equiv \bar{f}$ modulo $I$, for some polynomial $\bar{f}\in \bR$. Assume now that
$f\in\Ker\varphi$. Since $I\subset\Ker\varphi$, we have that $\bar{f}\in\Ker\varphi$.
Because the restriction of $\varphi$ to $\bR$ is the identity map, we have that
$\bar{f} = 0$ and hence $f\in I$.
\end{proof}

Denote by $\sigma':R/I\to R/I$ the algebra endomorphism which is induced by $\sigma:R\to R$
since $\sigma(I)\subset I$. By the above result we obtain immediately the following one.

\begin{corollary}
The difference algebras $(R/I,\sigma')$ and $(\bR,\bT)$ are isomorphic by means
of the difference algebra isomorphism $\varphi':R/I\to\bR$ which is induced by $\varphi$.
\end{corollary}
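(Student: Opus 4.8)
The plan is to deduce the corollary from the preceding theorem by an application of the first isomorphism theorem for commutative algebras, checking along the way that the induced isomorphism is compatible with the two difference operators.

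First I would record that $\varphi:R\to\bR$ is surjective: since the restriction of $\varphi$ to $\bR$ is the identity map, every element of $\bR\subset R$ lies in the image. Combined with the theorem's assertion that $\Ker\varphi = I$, the first isomorphism theorem yields a unique algebra isomorphism $\varphi':R/I\to\bR$ with $\varphi'\pi = \varphi$, where $\pi:R\to R/I$ denotes the canonical projection. This already gives the underlying algebra isomorphism; it remains only to see that it respects the difference structure.

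Next I would verify that $\varphi'$ intertwines $\sigma'$ and $\bT$, i.e. $\varphi'\sigma' = \bT\varphi'$. Recall that $\sigma'$ is well defined precisely because $\sigma(I)\subseteq I$ (which holds as $I$ is a difference ideal), and by construction $\sigma'\pi = \pi\sigma$. Then for any $f\in R$,
\[
\varphi'(\sigma'(\pi(f))) = \varphi'(\pi(\sigma(f))) = \varphi(\sigma(f)) = \bT(\varphi(f)) = \bT(\varphi'(\pi(f))),
\]
where the middle equality uses $\varphi\sigma = \bT\varphi$ from the theorem. Since $\pi$ is surjective this yields $\varphi'\sigma' = \bT\varphi'$, so $\varphi'$ is a difference algebra homomorphism; being also a bijective algebra homomorphism, it is an isomorphism of difference algebras.

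I do not anticipate a genuine obstacle: the real content of the statement is carried by the theorem (namely $\Ker\varphi = I$) together with the already-observed fact that $I$ is a difference ideal, so that $\sigma'$ exists at all. What remains is the routine bookkeeping of the universal property of the quotient. The only point requiring a moment's care is to check the identity $\varphi'\sigma' = \bT\varphi'$ on $R/I$ using representatives from $R$, where surjectivity of $\pi$ makes the passage immediate.
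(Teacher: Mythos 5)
Your argument is correct and is precisely the routine verification the paper leaves implicit when it says the corollary follows ``immediately'' from the theorem: surjectivity of $\varphi$ (identity on $\bR$) plus $\Ker\varphi = I$ give the algebra isomorphism via the first isomorphism theorem, and the intertwining $\varphi'\sigma' = \bT\varphi'$ follows from $\varphi\sigma = \bT\varphi$ by passing through the surjection $\pi$. No discrepancy with the paper's (omitted) proof.
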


In terms of bijective polynomial maps corresponding to algebra isomorphisms, the above result
can be restated in the following way.

\begin{proposition}
Consider $V_\KK(I)$ the set of all $\KK$-solutions of $(\ref{dsys})$.
The map $\iota:V_\KK(I)\to\KK^r$ $(r = r_1 + \ldots + r_n)$ such that
\[
(a_1,\ldots,a_n)\mapsto (a_1(0),\ldots,a_1(r_1-1),\ldots,a_n(0),\ldots,a_n(r_n-1))
\]
is bijective and $\iota,\iota^{-1}$ are both polynomial maps.
\end{proposition}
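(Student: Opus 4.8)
\emph{Proof idea.}
The plan is to obtain this statement as the geometric counterpart of the preceding corollary, which provides the difference algebra isomorphism $\varphi':R/I\to\bR$, by translating it into the language of $\KK$-points and polynomial maps.

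First I would identify $V_\KK(I)$ with $\KK^r$. A $\KK$-solution $(a_1,\ldots,a_n)$ of $(\ref{dsys})$ amounts to the $\KK$-algebra homomorphism $R\to\KK$, $x_i(t)\mapsto a_i(t)$; since the solution satisfies the difference equations this homomorphism kills $I$, hence factors through a homomorphism $\psi:R/I\to\KK$, and conversely every such $\psi$ arises in this way. Thus $V_\KK(I)$ is the set of $\KK$-algebra homomorphisms $R/I\to\KK$. Precomposing with $(\varphi')^{-1}$ identifies this set with the set of $\KK$-algebra homomorphisms $\bR\to\KK$, and since $\bR=\KK[\bX]$ is a polynomial ring in the $r$ variables of $\bX$, evaluation at $\bX$ identifies the latter with $\KK^r$. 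Unwinding the composite, the point of $\KK^r$ attached to $(a_1,\ldots,a_n)$ has coordinates $\psi\bigl((\varphi')^{-1}(x_i(j))\bigr)$ for $0\le j\le r_i-1$; because $\varphi'$ is the identity on the image of $\bR$ in $R/I$ we have $(\varphi')^{-1}(x_i(j))=x_i(j)+I$, so these coordinates equal $a_i(j)$, that is, they form precisely the initial state. Hence under this chain of bijections $\iota$ becomes the identity of $\KK^r$, so $\iota$ is bijective. (Alternatively, $\iota$ is injective because $v(t)=\T^t(v(0))$, and surjective because iterating $\T$ on any $w\in\KK^r$ yields a solution with initial state $w$.)

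It then remains to see that $\iota$ and $\iota^{-1}$ are polynomial maps. Regarding $V_\KK(I)$ as a subset of the affine coordinate space indexed by $X$, the map $\iota$ is simply the projection onto the finitely many coordinates indexed by $\bX$, hence linear and in particular polynomial. For $\iota^{-1}$ I would use that $\varphi'$ sends the class of $x_i(r_i+t)$ to $\bT^t(f_i)\in\bR$ for all $1\le i\le n$ and $t\ge 0$, which is exactly how $\varphi$, and hence $\varphi'$, was defined. Translating through the identification above, the solution with initial state $v(0)$ satisfies $a_i(r_i+t)=\bigl(\bT^t(f_i)\bigr)(v(0))$, while for $0\le j\le r_i-1$ the values $a_i(j)$ are the coordinates of $v(0)$ themselves. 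Since each $\bT^t(f_i)$ is a polynomial in $\bX$, every coordinate $a_i(t)$ of the solution is a polynomial function of the $r$ coordinates of $v(0)=\iota(a_1,\ldots,a_n)$, which is precisely the assertion that $\iota^{-1}:\KK^r\to V_\KK(I)$ is a polynomial map.

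The only genuine work here is the bookkeeping: fixing once and for all the ambient coordinate spaces in which $V_\KK(I)$ and $\KK^r$ sit, and agreeing that a map into an infinite coordinate space counts as polynomial when each of its components is a polynomial in finitely many source coordinates. With these conventions in place, the proposition is a formal consequence of the isomorphism $R/I\cong\bR$ together with the explicit description of $\varphi'$ on the generators $x_i(r_i+t)$, so I do not expect any real obstacle beyond making these identifications precise.
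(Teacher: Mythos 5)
Your proposal is correct and follows essentially the route the paper intends: the paper gives no explicit proof, presenting the proposition as a direct restatement of the preceding corollary via the standard correspondence between the algebra isomorphism $\varphi':R/I\to\bR$ and a bijective polynomial map on $\KK$-points, and your argument simply spells out that correspondence (identifying $V_\KK(I)$ with homomorphisms $R/I\to\KK$, checking that the induced bijection is $\iota$, and reading off polynomiality of $\iota^{-1}$ from $\varphi(x_i(r_i+t))=\bT^t(f_i)$). The extra direct argument for bijectivity via $v(t)=\T^t(v(0))$ is a harmless bonus.
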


From the above result we obtain that the set of keys that are compatible with
the given keystream $b$, that is, $\iota(V_\KK(I + J))$ is indeed the set
of the $\KK$-solutions of some polynomial system in $\bR$. In other words,
we have that $\iota(V_\KK(I + J)) = V_\KK(\bJ)$ for some ideal $\bJ\subset\bR$.
We look for a generating set of $\bJ$.

\begin{proposition}
We have that
\[
\bJ = \sum_{t\geq T} \langle \bT^t(f) - b(t) \rangle.
\]
\end{proposition}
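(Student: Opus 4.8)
The plan is to transport the problem along the isomorphism $\varphi':R/I\to\bR$ of the preceding corollary, which carries $\sigma'$ to $\bT$. We know $\iota(V_\KK(I+J)) = V_\KK(\bJ)$ for some ideal $\bJ\subset\bR$; the claim is the explicit formula $\bJ = \sum_{t\geq T}\langle\bT^t(f)-b(t)\rangle$. First I would observe that, in the notation of the theorem above, the map $\iota$ is the polynomial-map incarnation of $\varphi'$: if $(a_1,\ldots,a_n)\in V_\KK(I)$ then its image under $\iota$ is exactly its initial state, and $\varphi'$ is the induced identity on $\bR$. So it suffices to show that the image of the ideal $I+J$ under $\varphi':R/I\to\bR$ (equivalently, the image of $J$ under $\varphi$ after passing to $R/I$) is the ideal $\sum_{t\geq T}\langle\bT^t(f)-b(t)\rangle$, and then invoke the fact that a bijective polynomial map with polynomial inverse matches up $\KK$-solution sets of corresponding ideals.

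Concretely, I would argue as follows. Recall $J = \sum_{t\geq T}\langle\sigma^t(f)-b(t)\rangle\subset R$, so $I+J = I + \sum_{t\geq T}\langle\sigma^t(f)-b(t)\rangle$. Applying the homomorphism $\varphi:R\to\bR$, and using $\ker\varphi = I$ together with $\varphi\sigma = \bT\varphi$ (both from the theorem), we get
\[
\varphi(I+J) = \varphi(J) = \sum_{t\geq T}\langle\varphi(\sigma^t(f))-b(t)\rangle
= \sum_{t\geq T}\langle\bT^t(\varphi(f))-b(t)\rangle = \sum_{t\geq T}\langle\bT^t(f)-b(t)\rangle,
\]
where the last equality uses that $f\in\bR$ and $\varphi|_{\bR} = \id$. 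Since $\varphi$ is surjective with kernel $I$, the image $\varphi(I+J)$ is precisely the ideal of $\bR$ corresponding to $I+J$ under the isomorphism $R/I\cong\bR$; call it $\bJ$. Then $V_\KK(\bJ)$ is the image of $V_\KK(I+J)$ under the bijection $\iota$, which is exactly the set of keys compatible with the keystream $b$, as asserted.

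The only genuinely delicate point is bookkeeping about which ring each ideal lives in and the compatibility of taking $\KK$-solution sets with the ring isomorphism: one must check that $V_\KK(\bJ)$, computed inside $\bR$, really does coincide with $\iota(V_\KK(I+J))$, and not merely with its Zariski closure or some larger set. This follows because $\iota$ and $\iota^{-1}$ are both polynomial maps (the previous proposition), so $\iota$ identifies the affine scheme $\mathrm{Spec}(R/I)$ with $\mathrm{Spec}(\bR) = \KK^r$ and hence matches $\KK$-points of $V_\KK(I+J)$ with $\KK$-points of the vanishing locus of $\bJ = \varphi(I+J)$; intersecting with $\KK^r$ on both sides is preserved. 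Everything else — the manipulation of generators of $J$ and the pushforward through $\varphi$ — is routine given $\varphi\sigma=\bT\varphi$ and $\ker\varphi=I$.
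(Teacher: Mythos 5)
Your proposal is correct and follows essentially the same route as the paper: the paper's proof is the one-line observation that, since $f\in\bR$, the polynomial $\sigma^t(f)$ maps to $\bT^t(f)$ under the difference algebra homomorphism $\varphi$, which is precisely the computation at the heart of your argument. You simply spell out the surrounding bookkeeping (pushing the ideal $I+J$ forward through $\varphi$ and matching $\KK$-solution sets via $\iota$) that the paper leaves implicit from the preceding theorem and propositions.
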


\begin{proof}
Since $f\in\bR$, it is sufficient to observe that the polynomial $\sigma^t(f)$ maps to
$\bT^t(f)$ under the difference algebra homomorphism $\varphi$.
\end{proof}

A possible problem with the equations satisfied by the keys compatible with a given keystream
is that they could have a very high degree if the explicit difference system $(\ref{dsys})$
is non-linear. To avoid this problem, one can perform a partial elimination only with
the lowest degree equations of the system $(\ref{dsys})$. We make use of this approach
when attacking the cipher \E0\ in the next sections since its explicit difference system
contains linear equations, that is, LFSRs. We formalize this partial elimination approach
by the following results which are a straightforward generalization of the previous ones.

Fix an integer $0\leq m\leq n$. Denote
\begin{equation*}
\begin{gathered}
\bX'_m = \{x_1(0),\ldots,x_1(r_1-1),\ldots,x_m(0),\ldots,x_m(r_m-1)\}, \\
\bX''_m = \bigcup_{t\geq 0} \{x_{m+1}(t),\ldots,x_n(t)\}
\end{gathered}
\end{equation*}
and put $\bX_m = \bX'_m\cup\bX''_m\subset X$. Denote by $\bR_m = \KK[\bX_m]\subset R$
the corresponding subalgebra. We define the endomorphism $\bT_m:\bR_m\to\bR_m$ such that
\begin{equation*}
\begin{gathered}
x_i(0)\mapsto x_i(1),\ldots,x_i(r_i-2)\mapsto x_i(r_i-1),
x_i(r_i-1)\mapsto f_i\ (1\leq i\leq m), \\
x_i(t)\mapsto x_i(t+1)\ (m+1\leq i\leq n).
\end{gathered}
\end{equation*}
We have clearly that $\bT_0 = \sigma$ and $\bT_n = \bT$. Define the ideal $\bI_m\subset\bR_m$
such that
\[
\bI_m = \sum_{t\geq 0} \langle x_{m+1}(r_{m+1}+t) - \bT_m^t(f_{m+1}),\ldots,
x_n(r_n+t) - \bT_m^t(f_n) \rangle.
\]
Note that $\bT_m(\bI_m)\subset \bI_m$ and
$\bI_m = \langle x_{m+1}(r_{m+1}) - f_{m+1}, \ldots, x_n(r_n) - f_n \rangle_{\bT_m}$.
Denote by $\bT'_m:\bR_m/\bI_m\to\bR_m/\bI_m$ the algebra endomorphism which is induced by
$\bT_m:\bR_m\to\bR_m$.

\begin{theorem}
Let $\varphi_m:R\to\bR_m$ be the algebra homomorphism such that its restriction to $\bR_m$
is the identity map and for all $t\geq 0$ one has that
\[
x_i(r_i + t)\mapsto \bT_m^t(f_i)\ (1\leq i\leq m).
\]
It holds that $\varphi_m$ is a difference algebra homomorphism and its kernel is
$I = \ker\varphi_m$.
\end{theorem}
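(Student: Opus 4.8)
The plan is to carry over, mutatis mutandis, the proof of the corresponding statement for $\varphi$, that is, the case $m=n$. One first checks that $\varphi_m$ is well defined: since $\bX\subseteq\bX_m$ we have $f_i\in\bR\subseteq\bR_m$, so the prescriptions $x_i(r_i+t)\mapsto\bT_m^t(f_i)$ ($1\le i\le m$, $t\ge 0$) take values in $\bR_m$, and they are consistent with the requirement that $\varphi_m$ restrict to the identity on $\bR_m$ because for $1\le i\le m$ none of the variables $x_i(r_i+t)$ lies in $\bX_m$, while these variables together with $\bX_m$ exhaust $X$.

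Next, since $\varphi_m$ is an algebra homomorphism, proving that it is a difference algebra homomorphism amounts to verifying $\varphi_m\sigma=\bT_m\varphi_m$ on the variables of $X$, and I would split this into three cases. For $x_i(s)$ with $1\le i\le m$ and $0\le s\le r_i-2$, and for $x_i(t)$ with $m+1\le i\le n$ and $t\ge 0$, both $\sigma$ and $\bT_m$ send the variable to the next one (which again lies in $\bX_m$) and $\varphi_m$ acts as the identity, so the two sides agree. For $x_i(r_i-1)$ with $1\le i\le m$ one has $\varphi_m(\sigma(x_i(r_i-1)))=\varphi_m(x_i(r_i))=f_i=\bT_m(x_i(r_i-1))=\bT_m(\varphi_m(x_i(r_i-1)))$. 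For $x_i(r_i+t)$ with $1\le i\le m$ and $t\ge 0$ one has $\varphi_m(\sigma(x_i(r_i+t)))=\varphi_m(x_i(r_i+t+1))=\bT_m^{t+1}(f_i)=\bT_m(\bT_m^t(f_i))=\bT_m(\varphi_m(x_i(r_i+t)))$. Hence $\varphi_m\sigma=\bT_m\varphi_m$, and in particular $\varphi_m\sigma^t=\bT_m^t\varphi_m$ for every $t\ge 0$.

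For the kernel I would argue exactly as in the case $m=n$. The inclusion $I\subseteq\ker\varphi_m$ holds because, for $1\le i\le m$ and $t\ge 0$, $\varphi_m(x_i(r_i+t)-\sigma^t(f_i))=\bT_m^t(f_i)-\bT_m^t(\varphi_m(f_i))=\bT_m^t(f_i)-\bT_m^t(f_i)=0$, using that $\varphi_m$ is the identity on $\bR_m$ and $f_i\in\bR_m$. Conversely, let $f\in\ker\varphi_m$. Using the congruences $x_i(r_i+t)\equiv\sigma^t(f_i)$ modulo $I$ ($1\le i\le m$, $t\ge 0$) one rewrites $f\equiv\bar f$ modulo $I$ for some $\bar f\in\bR_m$; these congruences remove precisely the variables of $X$ that do not belong to $\bX_m$, namely the $x_i(r_i+t)$ with $1\le i\le m$. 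Since $I\subseteq\ker\varphi_m$ we get $\bar f\in\ker\varphi_m$, and since $\varphi_m$ restricted to $\bR_m$ is the identity we conclude $\bar f=\varphi_m(\bar f)=0$, hence $f\in I$.

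The one point that is not mere bookkeeping is, as already in the case $m=n$, the rewriting $f\equiv\bar f$ modulo $I$: substituting $x_i(r_i+t)=\sigma^t(f_i)$ can reintroduce variables $x_j(r_j+s)$ with $1\le j\le m$ at a comparable or even larger clock, so one must fix a well-founded ranking of the difference variables for which each generator $x_i(r_i+t)-\sigma^t(f_i)$ has leader $x_i(r_i+t)$ and $\sigma^t(f_i)$ involves only strictly smaller variables; this guarantees that the rewriting terminates with a value in $\bR_m=\KK[\bX_m]$. Granting this, the argument introduces no difficulty beyond the case already settled, which is why the statement is indeed a straightforward generalization.
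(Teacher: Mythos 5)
Your overall strategy --- carrying over, case by case, the proof given for $\varphi$ (the case $m=n$) --- is the natural one here, since the paper offers no separate proof for this generalization. Your verification that $\varphi_m\sigma=\bT_m\varphi_m$ on all four classes of variables is correct, and your extra remark on why the rewriting modulo the congruences $x_i(r_i+t)\equiv\sigma^t(f_i)$ terminates (each substitution strictly decreases the ``excess clock'' $t$, so a ranking with leader $x_i(r_i+t)$ is well founded) is a point the original argument for $m=n$ passes over in silence; it is consistent with the observation at the end of Section 3 that these generators form a \Gr\ basis with distinct linear leading monomials.

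There is, however, a genuine problem in the kernel computation, hidden exactly in the clause ``the inclusion $I\subseteq\ker\varphi_m$ holds because, for $1\leq i\leq m$ and $t\geq 0$, \dots''. The ideal $I$ is generated by $x_i(r_i+t)-\sigma^t(f_i)$ for \emph{all} $1\leq i\leq n$, and you only check the generators with $i\leq m$. For $m+1\leq i\leq n$ the generator $x_i(r_i)-f_i$ has both terms inside $\bR_m$ (indeed $x_i(r_i)\in\bX''_m$ and $f_i\in\bR\subset\bR_m$), so $\varphi_m$ fixes it: it is a nonzero element of the image, not of the kernel. More generally $\varphi_m(x_i(r_i+t)-\sigma^t(f_i))=x_i(r_i+t)-\bT_m^t(f_i)$ for $i>m$, which is precisely a generator of $\bI_m$. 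Hence $I\not\subseteq\ker\varphi_m$ whenever $m<n$, and the equality $\ker\varphi_m=I$ cannot hold as written; what your argument actually establishes --- and what is needed for the subsequent corollary $R/I\cong\bR_m/\bI_m$ to come out right --- is $\ker\varphi_m=I_m$, where $I_m=\langle x_1(r_1)-f_1,\ldots,x_m(r_m)-f_m\rangle_\sigma$ is the sub-ideal introduced at the end of the section. Note that your reverse direction already proves this sharper statement: reducing $f$ only by the congruences with $i\leq m$ lands $\bar f$ in $\bR_m$ and yields $\ker\varphi_m\subseteq I_m$, which together with the correctly verified $I_m\subseteq\ker\varphi_m$ gives $\ker\varphi_m=I_m$. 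You should conclude with that equality rather than with $\ker\varphi_m=I$, which your own computation contradicts.
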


\begin{corollary}
The difference algebras $(R/I,\sigma')$ and $(\bR_m/\bI_m,\bT'_m)$ are isomorphic by means
of the difference algebra isomorphism $\varphi'_m:R/I\to\bR_m/\bI_m$ which is induced by
$\varphi_m$.
\end{corollary}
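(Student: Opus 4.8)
The plan is to show that $\varphi_m$ descends to the quotient algebras and there induces the asserted isomorphism, analogously to the previous corollary. First I would record, from the preceding theorem, that $\varphi_m:R\to\bR_m$ is a difference algebra homomorphism, that is, $\varphi_m\sigma=\bT_m\varphi_m$, that it is surjective because its restriction to $\bR_m$ is the identity (so that $\bR_m=\varphi_m(\bR_m)\subseteq\varphi_m(R)$), and that $\ker\varphi_m\subseteq I$.

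The main step is the identity $\varphi_m(I)=\bI_m$. Since $\varphi_m$ is surjective, $\varphi_m(I)$ is the ideal of $\bR_m$ generated by the images of the generators $x_i(r_i+t)-\sigma^t(f_i)$ $(1\le i\le n,\ t\ge 0)$ of $I$. For $1\le i\le m$ one has $\varphi_m(x_i(r_i+t))=\bT_m^t(f_i)$ by the definition of $\varphi_m$, while $\varphi_m(\sigma^t(f_i))=\bT_m^t(\varphi_m(f_i))=\bT_m^t(f_i)$, because $\varphi_m$ is a difference homomorphism and $f_i\in\bR\subseteq\bR_m$ is fixed; hence these generators map to $0$. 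For $m+1\le i\le n$ the variable $x_i(r_i+t)$ lies in $\bX_m$ and is fixed by $\varphi_m$, so the generator maps to $x_i(r_i+t)-\bT_m^t(f_i)$, which is exactly a defining generator of $\bI_m$. Thus $\varphi_m(I)=\bI_m$, and $\varphi_m$ factors through $R/I$ to a surjective algebra homomorphism $\varphi'_m:R/I\to\bR_m/\bI_m$, $f+I\mapsto\varphi_m(f)+\bI_m$.

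Then I would verify injectivity and compatibility with the difference structures. If $\varphi_m(f)\in\bI_m=\varphi_m(I)$, then by surjectivity $\varphi_m(f)=\varphi_m(g)$ for some $g\in I$, so $f-g\in\ker\varphi_m\subseteq I$ and $f\in I$; hence $\varphi'_m$ is bijective. Because $\sigma(I)\subseteq I$ and $\bT_m(\bI_m)\subseteq\bI_m$, the induced endomorphisms $\sigma'$ and $\bT'_m$ are well defined, and passing the identity $\varphi_m\sigma=\bT_m\varphi_m$ to the quotients gives $\varphi'_m\sigma'=\bT'_m\varphi'_m$; since a bijective difference algebra homomorphism is an isomorphism of difference algebras, this proves the corollary.

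The point that will need the most care is the equality $\varphi_m(I)=\bI_m$ — in particular the claim that the images of the generators of $I$ produce precisely $\bI_m$, the first $m$ families collapsing to $0$ and the remaining $n-m$ ones reproducing the defining generators of $\bI_m$. Everything else is the routine bookkeeping of the first isomorphism theorem, the substantive part — the control of $\ker\varphi_m$, which rests on a terminating rewriting of polynomials of $R$ modulo the difference equations of the first $m$ sequences — being already supplied by the preceding theorem.
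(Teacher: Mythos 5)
Your proof is correct, and it follows the route the paper intends: the corollary is stated there without proof, as part of a ``straightforward generalization'' of the unrelativized case, where $R/I\cong\bR$ is just the first isomorphism theorem applied to the surjection $\varphi$ with kernel $I$. The one genuinely new ingredient needed in the relative case is the identity $\varphi_m(I)=\bI_m$, and your verification of it on the generators of $I$ (the first $m$ families collapsing to $0$, the remaining ones landing on the defining generators of $\bI_m$) is exactly right; combined with $\ker\varphi_m\subseteq I$ it gives bijectivity of the induced map, and the intertwining $\varphi'_m\sigma'=\bT'_m\varphi'_m$ passes to the quotients as you say.

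One point deserves emphasis, since it concerns the input you take from the preceding theorem. That theorem literally asserts $\ker\varphi_m=I$, which is incompatible with your (correct) identity $\varphi_m(I)=\bI_m$: if $I$ were the full kernel then $\varphi_m(I)=0$, whereas $\bI_m\neq 0$ in general, and the induced map $R/I\to\bR_m/\bI_m$ could not then be injective. What is actually true, and all your argument uses, is the containment $\ker\varphi_m\subseteq I$; more precisely $\ker\varphi_m=I_m$, the difference ideal generated by the first $m$ equations $x_i(r_i)-f_i$ $(1\leq i\leq m)$, which the paper only introduces a little later. So your proof in effect repairs what must be a typo in the statement of the theorem (a carry-over from the case $m=n$), and with $\ker\varphi_m=I_m$ and $\varphi_m(I)=\bI_m$ the corollary follows exactly as you argue. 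It would be worth making this explicit rather than citing the theorem's kernel claim as stated.
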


In terms of corresponding polynomial maps, we have the following result.

\begin{proposition}
\label{parelim1}
Consider $V_\KK(\bI_m)$ the set of all $\KK$-solutions of the system
of polynomial equations
\begin{equation*}
\left\{
\begin{array}{rcl}
x_{m+1}(r_{m+1} + t) & = & \bT_m^t(f_{m+1}) \\
& \vdots \\ 
x_n(r_n + t) & = & \bT_m^t(f_n) \\
\end{array}
\right.
\quad
(t\geq 0)
\end{equation*}
over the set of variables $\bX_m$. The map $\iota_m:V_\KK(I)\to V_\KK(\bI_m)$ such that
\[
(a_1,\ldots,a_n)\mapsto
(a_1(0),\ldots,a_1(r_1-1),\ldots,a_m(0),\ldots,a_m(r_n-1),a_{m+1},\ldots,a_n)
\]
is bijective and $\iota_m,\iota_m^{-1}$ are both polynomial maps.
\end{proposition}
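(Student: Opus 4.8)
The statement is nothing but the translation into the language of polynomial maps of the difference algebra isomorphism $\varphi'_m:R/I\to\bR_m/\bI_m$ provided by the Corollary above, so the plan is to \emph{dualize} that isomorphism and then check that the bijection it induces on $\KK$-points coincides with $\iota_m$, and that $\iota_m$ and $\iota_m^{-1}$ are polynomial.

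First I would set up the identifications. A $\KK$-solution $(a_1,\ldots,a_n)\in V_\KK(I)$ is the same datum as a $\KK$-algebra homomorphism $\psi:R\to\KK$ with $\psi(I)=0$, namely $\psi(x_i(t))=a_i(t)$; equivalently, an element of $\mathrm{Hom}_\KK(R/I,\KK)$. Likewise a point of $V_\KK(\bI_m)$ is a homomorphism $\chi:\bR_m\to\KK$ with $\chi(\bI_m)=0$, i.e.\ an element of $\mathrm{Hom}_\KK(\bR_m/\bI_m,\KK)$; unwinding, this is precisely a tuple $(a_1(0),\ldots,a_1(r_1-1),\ldots,a_m(0),\ldots,a_m(r_m-1),a_{m+1},\ldots,a_n)$ satisfying the displayed system. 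Since $\varphi_m$ restricts to the identity on the subalgebra $\bR_m\subseteq R$ and has kernel $I$ by the preceding Theorem, precomposition with $\varphi_m$ sends a point $\chi$ of $V_\KK(\bI_m)$ to the homomorphism $\chi\circ\varphi_m:R\to\KK$, which kills $I=\ker\varphi_m$, hence lies in $V_\KK(I)$; call this map $\kappa$. In the other direction, a homomorphism $\psi:R\to\KK$ vanishing on $I$ restricts to $\psi|_{\bR_m}:\bR_m\to\KK$, and I would check it kills $\bI_m$: for $j>m$ one has $f_j\in\bR\subseteq\bR_m$ (because $\bX\subseteq\bX_m$), so $\varphi_m(f_j)=f_j$, and since $\varphi_m\sigma=\bT_m\varphi_m$ we get $\varphi_m(\sigma^t(f_j))=\bT_m^t(f_j)=\varphi_m(\bT_m^t(f_j))$, whence $\sigma^t(f_j)-\bT_m^t(f_j)\in\ker\varphi_m=I$; therefore $\psi\bigl(x_j(r_j+t)-\bT_m^t(f_j)\bigr)=\psi\bigl(x_j(r_j+t)-\sigma^t(f_j)\bigr)=0$ as the latter is a generator of $I$. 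Thus $\psi\mapsto\psi|_{\bR_m}$ maps $V_\KK(I)\to V_\KK(\bI_m)$, and under the identifications above it is exactly $\iota_m$.

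Next I would show $\kappa$ and $\iota_m$ are mutually inverse. On one hand, for $\chi\in V_\KK(\bI_m)$ the coordinate of $\iota_m(\kappa(\chi))$ at a variable $x_i(t)\in\bX_m$ is $(\chi\circ\varphi_m)(x_i(t))=\chi(x_i(t))$, so $\iota_m\circ\kappa=\mathrm{id}$. On the other hand, for $\psi\in V_\KK(I)$ the homomorphisms $\kappa(\iota_m(\psi))=\psi|_{\bR_m}\circ\varphi_m$ and $\psi$ agree on all generators $x_i(t)$ of $R$: trivially on $x_i(t)\in\bX_m$, and for $i\le m$, $t\ge r_i$ because $\varphi_m(x_i(t))=\bT_m^{\,t-r_i}(f_i)\in\bR_m$ while $\bT_m^{\,t-r_i}(f_i)\equiv\sigma^{\,t-r_i}(f_i)\equiv x_i(t)$ modulo $I$ by the congruence established above; hence $\kappa\circ\iota_m=\mathrm{id}$. (Equivalently, one may simply invoke the Corollary: $\varphi'_m$ is an algebra isomorphism, so the contravariant functor $\mathrm{Hom}_\KK(-,\KK)$ turns it into a bijection, and $\kappa$, $\iota_m$ are its two directions.) In particular $\iota_m$ is bijective.

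Finally, polynomiality. The map $\iota_m$ reads off exactly the coordinates indexed by $\bX_m$, so it is the restriction of a coordinate projection and is polynomial. For $\iota_m^{-1}=\kappa$, the coordinate of $\kappa(\chi)$ at $x_i(t)$ equals $\chi(\varphi_m(x_i(t)))$: for $x_i(t)\in\bX_m$ this is the corresponding input coordinate, and for $i\le m$, $t\ge r_i$ it is the value at $\chi$ of $\bT_m^{\,t-r_i}(f_i)$, which lies in $\bR_m=\KK[\bX_m]$ and is thus a polynomial in the input coordinates; hence $\iota_m^{-1}$ is polynomial. I do not expect a genuine obstacle here: the only point that needs care is the bookkeeping of which variables survive the partial elimination, and in particular the observation $\bX\subseteq\bX_m$ (so $f_1,\ldots,f_n\in\bR_m$ and $\varphi_m$ fixes them), which is precisely what makes the restriction map land in $V_\KK(\bI_m)$ and makes the two maps inverse to each other.
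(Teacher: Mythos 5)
Your proposal is correct and takes essentially the same route the paper intends: the proposition is presented as the restatement on $\KK$-points of the difference algebra isomorphism $\varphi'_m:R/I\to\bR_m/\bI_m$ from the preceding corollary, and your dualization (restriction $\psi\mapsto\psi|_{\bR_m}$ versus precomposition with $\varphi_m$), together with the checks that these are mutually inverse and coordinatewise polynomial, is exactly the omitted argument. The one delicate point --- that $\bX\subseteq\bX_m$, so $f_j\in\bR_m$ is fixed by $\varphi_m$ and $\sigma^t(f_j)-\bT_m^t(f_j)\in I$ --- is handled correctly.
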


\begin{proposition}
\label{parelim2}
We have that $\iota_m(V_\KK(I + J)) = V_\KK(\bI_m + \bJ_m)$ where
\[
\bJ_m = \sum_{t\geq T} \langle \bT_m^t(f) - b(t) \rangle\subset \bR_m
\]
\end{proposition}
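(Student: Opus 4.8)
The plan is to reproduce, with the difference algebra homomorphism $\varphi_m\colon R\to\bR_m$ and the bijection $\iota_m$ in place of $\varphi$ and $\iota$, the same argument that was used to identify the eliminated ideal $\bJ$ in the fully eliminated case $m=n$. The paper already signals that the partial-elimination statements are straightforward generalizations of the earlier ones, and this is the precise sense in which that is true.

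First I would record the algebraic identity that drives everything. Since $\bX\subseteq\bX_m$ we have $\bR\subseteq\bR_m$, so $f\in\bR_m$ and therefore $\varphi_m(f)=f$ because the restriction of $\varphi_m$ to $\bR_m$ is the identity map. As $\varphi_m$ is a difference algebra homomorphism, i.e.\ $\varphi_m\sigma=\bT_m\varphi_m$, iterating gives $\varphi_m\sigma^t=\bT_m^t\varphi_m$ for all $t\geq 0$, hence
\[
\varphi_m\bigl(\sigma^t(f)\bigr)=\bT_m^t\bigl(\varphi_m(f)\bigr)=\bT_m^t(f),\qquad t\geq 0 .
\]
Consequently, for every $t\geq T$ the homomorphism $\varphi_m$ sends the generator $\sigma^t(f)-b(t)$ of $J$ to the generator $\bT_m^t(f)-b(t)$ of $\bJ_m$.

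Next I would transfer this to solution sets. By Proposition~\ref{parelim1}, $\iota_m$ is a bijection from $V_\KK(I)$ onto $V_\KK(\bI_m)$, and it is precisely the polynomial map corresponding to $\varphi_m$ (equivalently, to the induced isomorphism $\varphi'_m\colon R/I\to\bR_m/\bI_m$); in particular, evaluating any $g\in R$ at a point $p\in V_\KK(I)$ gives the same value as evaluating $\varphi_m(g)$ at $\iota_m(p)$. Applying this with $g=\sigma^t(f)-b(t)$ and using the displayed identity shows that a point $p\in V_\KK(I)$ satisfies all the equations $\sigma^t(f)-b(t)=0$ $(t\geq T)$ if and only if $\iota_m(p)$ satisfies all the equations $\bT_m^t(f)-b(t)=0$ $(t\geq T)$. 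Since $V_\KK(I+J)=V_\KK(I)\cap V_\KK(J)$ and $V_\KK(\bI_m+\bJ_m)=V_\KK(\bI_m)\cap V_\KK(\bJ_m)$, and since $\iota_m$ already matches $V_\KK(I)$ with $V_\KK(\bI_m)$, this gives $\iota_m(V_\KK(I+J))=V_\KK(\bI_m+\bJ_m)$.

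The only step needing care is the evaluation compatibility $g(p)=\varphi_m(g)(\iota_m(p))$: it is not stated as a separate fact but must be read off from the construction of $\iota_m$ in Proposition~\ref{parelim1}, where $\iota_m$ is produced as the bijective polynomial map dual to the algebra isomorphism $\varphi'_m$. Once that dictionary between $\iota_m$ and $\varphi_m$ is made explicit, the proof is formal and uses no computation beyond the single displayed identity, exactly as in the non-partial case whose proof is one line.
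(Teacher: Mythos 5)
Your proposal is correct and follows essentially the same route as the paper: the paper proves the non-partial case ($m=n$) with exactly the one-line observation that $\sigma^t(f)$ maps to $\bT^t(f)$ under the difference algebra homomorphism because $f$ already lies in the target subalgebra, and states Proposition~\ref{parelim2} as the straightforward generalization with $\varphi_m$, $\bT_m$, $\iota_m$ in place of $\varphi$, $\bT$, $\iota$. You have simply made explicit the evaluation compatibility between $\iota_m$ and $\varphi_m$ that the paper leaves implicit, which is a faithful filling-in rather than a different argument.
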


Note that actual computations with the ideal $\bJ_m$ are done by assuming that
$T\leq t\leq B$ for a sufficiently large bound $B\geq T$. Moreover, for an invertible
cipher we always assume that $T = 0$.

Finally, it is useful to observe the following. Suppose that $R$ is endowed with a monomial
ordering $\succ$ such that, for all $t\geq 0$
\[
x_i(r_i + t)\succ \sigma^t(f_i)\ (1\leq i\leq m).
\]
In this case, for any polynomial $g\in\bR_m$ we can compute $\bT_m^t(g)$ in an alternative way.
Consider the ideal
\begin{equation*}
\begin{gathered}
I_m = \sum_{t\geq 0}\langle x_1(r_1 + t) - \sigma^t(f_1),\ldots,
x_m(r_m + t) - \sigma^t(f_m) \rangle \\
= \langle x_1(r_1) - f_1,\ldots, x_m(r_m) - f_m \rangle_\sigma\subset R
\end{gathered}
\end{equation*}
whose generating set is a \Gr\ basis because its leading monomials are distinct
linear ones. We have that $\bT_m^t(g)$ is indeed the normal form of the polynomial
$g$ modulo $I_m$ with respect to the given monomial ordering. For more details,
see \cite{LST}.

\section{The stream cipher \E0}

From now on, we assume that the finite base field is $\KK = \GF(2) = \ZZ_2$.
In order to avoid confusion, in the present section we denote by $\dplus$
the sum in $\ZZ$ and by $+$ the sum in $\KK$.
The Bluetooth stream cipher \E0 is obtained by four Linear Feedback Shift Registers
(briefly LFSRs) that are combined by a Finite State Machine (FSM) as described
in Figure 1.

\begin{figure}[H]
\begin{center}
\includegraphics[width=1.0\textwidth]{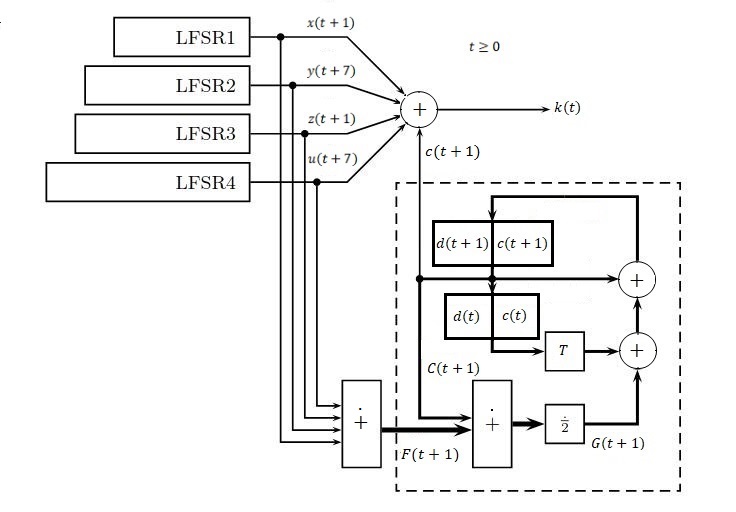}
\caption{\E0\ Bluetooth Stream Cipher}
\end{center}
\end{figure} 

The four LFSRs are respectively defined by the following primitive polynomials
with coefficients in $\KK$
\[
\begin{array}{rcl}
p_1 & = & x^{25} + x^{17} + x^{13} + x^5 + 1, \\
p_2 & = & x^{31} + x^{19} + x^{15} + x^7 + 1, \\
p_3 & = & x^{33} + x^{29} + x^9 + x^5 + 1, \\
p_4 & = & x^{39} + x^{35} + x^{11}+ x^3 + 1. \\
\end{array}
\]

\medskip
For all clocks $t\geq 0$, the state of the FSM consists of $4$ bits which are stored
in a pair of 2-bit delay elements, say
\[
(d(t),c(t)), (d(t+1),c(t+1))\in\KK^2.
\]
It is useful to define the corresponding integer numbers
\begin{equation*}
\begin{gathered}
C(t) = d(t) 2 \dplus c(t), \hspace{18pt} \\
C(t+1) = d(t+1) 2 \dplus c(t+1).
\end{gathered}
\end{equation*}
At any clock, the lower delay element stores the previous value
of the upper element, that is, $(d(t+1),c(t+1))$ stores in $(d(t),c(t))$.
Then, the new 2-bit $(d(t+2),c(t+2))$ for the upper delay element of the combiner
is computed by putting
\[
(d(t+2),c(t+2)) = (g_1(t+1),g_0(t+1)) + (d(t+1),c(t+1)) + T(d(t),c(t))
\]
where $T:\KK^2\rightarrow \KK^2$ is the linear bijection
\[
(d(t),c(t))\mapsto (c(t), d(t) + c(t))
\]
and the 2-bit $(g_1(t+1),g_0(t+1))\in\KK^2$ is defined as follows. Consider the sum
\[
F(t) = x(t)\dplus y(t+6)\dplus z(t)\dplus u(t+6)\in \ZZ
\]
and define the integer
\[
G(t+1) = \left\lfloor \frac{F(t+1)\dplus C(t+1)}{2} \right\rfloor.
\]
Since $0\leq G(t+2)\leq 3$, we define the 2-bit element $(g_1(t+1),g_0(t+1))\in\KK^2$
as the binary representation of $G(t+1)$, namely
\[
G(t+1) = g_1(t+1) 2\dplus g_0(t+1).
\]
Finally, for all $t\geq 0$ the keystream bits of the cipher \E0 are
computed as the sum
\[
k(t) = x(t+1) + y(t+7) + z(t+1) + u(t+7) + c(t+1)\in \KK.
\]

\section{\E0 is a difference stream cipher}

We show now that we can obtain \E0 as a difference stream cipher, that is, we can translate it
into a system of difference equations. The four independent LFSRs correspond immediately to
the following subsystem of linear difference equations

\begin{equation}
\left\{
\begin{array}{rcl}
x(25) & = & x(0) + x(5) + x(13) + x(17), \\
y(31) & = & y(0) + y(7) + y(15) + y(19), \\
z(33) & = & z(0) + z(5) + z(9) + z(29), \\
u(39) & = & u(0) + u(3) + u(11) + u(35), \\
\end{array}
\right.
\end{equation}

Let us consider now the FSM combiner. Since $0\leq F(t)\leq 4$, the binary representation
of $F(t)$ consists of a 3-bit element $(f_2(t),f_1(t),f_0(t))\in\KK^3$, that is
\[
F(t) = f_2(t) 2^2\dplus f_1(t) 2\dplus f_0(t).
\]
Clearly, we can view $F(t) = x(t)\dplus y(t+6)\dplus z(t)\dplus u(t+6)$ as a function
$\KK^4\to \ZZ$ with variable set $\{x(t), y(t+6), z(t), u(t+6)\}$ and therefore
$f_0(t),f_1(t),f_2(t)$ as Boolean functions $\KK^4\to \KK$ with the same variable set.
By converting the latter functions into their Algebraic Normal Form (briefly ANF), that is,
as elements of the polynomial algebra $\KK[x(t),y(t+6),z(t),u(t+6)]$ modulo the identities
\begin{equation*}
\begin{gathered}
x(t)^2 + x(t) = y(t+6)^2 + y(t+6) = z(t)^2 + z(t) = u(t+6)^2 + u(t+6) = 0
\end{gathered}
\end{equation*}
we obtain that
\[
\begin{array}{rcl}
f_0(t) & = & x(t) + y(t+6) + z(t) + u(t+6), \\
f_1(t) & = & x(t)y(t+6) + x(t)z(t) + x(t)u(t+6) + y(t+6)z(t) \\
       &   & +\, y(t+6)u(t+6) + z(t)u(t+6), \\
f_2(t) & = & x(t) y(t+6) z(t) u(t+6).
\end{array}
\]
Standard methods to obtain the ANF involves Support and Minterm Representation of Boolean
functions and we refer to the book \cite{WF} for all details about them.

We consider now the following sum of integers
\[
F(t)\dplus C(t) = f_2(t) 2^2\dplus f_1(t) 2\dplus f_0(t)\dplus d(t) 2\dplus c(t)
\]
whose binary representation, as a result of carries, is
\begin{equation*}
\begin{gathered}
F(t)\dplus C(t) =  (f_2(t) + f_1(t)d(t) + f_0(t)c(t) d(t) + f_1(t)f_0(t)c(t)) 2^2 \\
\dplus\, (f_0(t)c(t) + f_1(t) + d(t)) 2\dplus (f_0(t) + c(t)).
\end{gathered}
\end{equation*}

By dividing by 2, we obtain that
\begin{equation*}
\begin{gathered}
\frac{F(t)\dplus C(t)}{2} =  (f_2(t) + f_1(t)d(t) + f_0(t)c(t) d(t)
+ f_1(t)f_0(t)c(t)) 2 \\
\dplus\, (f_0(t)c(t) + f_1(t) + d(t))\dplus (f_0(t) + c(t)) 2^{-1}.
\end{gathered}
\end{equation*}
and therefore it holds that
\begin{equation*}
\begin{gathered}
\left\lfloor \frac{F(t)\dplus C(t)}{2} \right\rfloor =
(f_2(t) + f_1(t)d(t) + f_0(t)c(t)d(t) + f_1(t)f_0(t)c(t)) 2 \\
\dplus\, (f_0(t)c(t) + f_1(t) + d(t)).
\end{gathered}
\end{equation*}

Since, by definition, we have that
\[
G(t+1) = g_1(t+1) 2\dplus g_0(t+1) = \left\lfloor \frac{F(t+1)\dplus C(t+1)}{2} \right\rfloor
\]
we conclude that
\[
\begin{array}{rcl}
g_1(t+1) & = & f_2(t+1) + f_1(t+1) d(t+1) + f_0(t+1) c(t+1) d(t+1) \\
         &   & +\, f_1(t+1) f_0(t+1) c(t+1), \\
g_0(t+1) & = & f_0(t+1) c(t+1) + f_1(t+1) + d(t+1). \\
\end{array}
\]
Finally, by using the definition
\[
(d(t+2),c(t+2)) = (g_1(t+1),g_0(t+1)) + (d(t+1),c(t+1)) + (c(t), d(t) + c(t))
\]
we obtain that
\[
\begin{array}{rcl}
d(t+2) & = & f_2(t+1) + f_1(t+1) d(t+1) + f_0(t+1) c(t+1) d(t+1) \\
       &   & +\, f_1(t+1) f_0(t+1) c(t+1) + d(t+1) + c(t), \\
c(t+2) & = & f_0(t+1) c(t+1) + f_1(t+1) + d(t+1) + c(t+1) + d(t) + c(t). \\
\end{array}
\]

\medskip
Define now the variable set
\begin{equation*}
\begin{gathered}
\bX = \{x(0),\ldots,x(24),y(0),\ldots,x(30),z(0),\ldots,z(32), \\
                        u(0),\ldots,u(38),c(0),c(1),d(0),d(1)\}
\end{gathered}
\end{equation*}
and the corresponding polynomial algebra $\bR = \KK[\bX]$. By the above calculations,
we have that \E0 is a difference stream cipher whose evolution of the internal state
$v(t)\in\KK^{132}$ ($t\geq 0$) is described by the following system of explicit difference
equations
\begin{equation}
\label{e0sys}
\left\{
\begin{array}{rcl}
x(25) & = & x(0) + x(5) + x(13) + x(17), \\
y(31) & = & y(0) + y(7) + y(15) + y(19), \\
z(33) & = & z(0) + z(5) + z(9) + z(29), \\
u(39) & = & u(0) + u(3) + u(11) + u(35), \\
c(2) & = & g'_0, \\
d(2) & = & g'_1 \\
\end{array}
\right.
\end{equation}
where the non-linear polynomials $g'_0,g'_1\in \bR$ are defined as
\[
\begin{array}{rcl}
g'_0 & = & x(1) c(1) + y(7) c(1) + z(1) c(1) + u(7) c(1) + x(1) y(7) + x(1) z(1) \\
     &   & +\, x(1) u(7) + y(7) z(1) + y(7) u(7) + z(1) u(7) + c(1) + d(1) + c(0) + d(0), \\
g'_1 & = & x(1) y(7) z(1) u(7) + x(1) y(7) d(1) + x(1) z(1) d(1) + x(1) u(7) d(1) \\
     &   & +\, y(7) z(1) d(1) + y(7) u(7) d(1) + z(1) u(7) d(1) + x(1) c(1) d(1) \\
     &   & +\, y(7) c(1) d(1) + z(1) c(1) d(1) + u(7) c(1) d(1) + x(1) y(7) z(1) c(1) \\
     &   & +\, x(1) y(7) u(7) c(1) + x(1) z(1) u(7) c(1) + y(7) z(1) u(7) c(1) + d(1) + c(0). \\
\end{array}
\]
Finally, the keystream polynomial $f\in\bR$ of \E0 is defined as
\[
f = x(1) + y(7) + z(1) + u(7) + c(1).
\]
In other words, for each clock $t\geq 0$ a keystream of \E0 is obtained by evaluating
the polynomial $f$ over the $t$-state $v(t)\in\KK^{132}$ of a $\KK$-solution of (\ref{e0sys}).

Note that Bluetooth specifications \cite{Bl} require that the keystream outputs after
a reinitialization at clock $T = 200$. Finally, remark that our description of \E0 matches
with the sample data contained in Appendix IV of those specifications once all initial states
of the LFSRs are reversed and one considers the initial state of the FSM as $(c(0),d(0),c(1),d(1))$.
Indeed, the initial states of the non-linear equations of (\ref{e0sys}) are $(c(0),c(1))$
and $(d(0),d(1))$.

\section{An algebraic attack to \E0}

As a result of Theorem \ref{invth} and Definition \ref{invsys}, we have that the explicit
difference system (\ref{e0sys}) of \E0 is invertible with inverse system
\begin{equation*}
\left\{
\begin{array}{rcl}
x(25) & = & x(20) + x(12) + x(8) + x(0), \\
y(31) & = & y(24) + y(16) + y(12) + y(0), \\
z(33) & = & z(28) + z(24) + z(4) + z(0), \\
u(39) & = & u(36) + u(28) + u(4) + u(0), \\
c(2) & = & h_0, \\
d(2) & = & h_1 \\
\end{array}
\right.
\end{equation*}
where the polynomials $h_0,h_1\in\bR$ are defined as
\[
\begin{array}{rcl}
h_0 & = & x(24) y(24) z(32) u(32) + x(24) y(24) z(32) c(1) + x(24) y(24) u(32) c(1) \\
& & +\, x(24) z(32) u(32) c(1) + y(24) z(32) u(32) c(1) + x(24) y(24) d(1) \\
& & +\, x(24) z(32) d(1) + y(24) z(32) d(1) + x(24) u(32) d(1) + y(24) u(32) d(1) \\
& & +\, z(32) u(32) d(1) + x(24) c(1) d(1) + y(24) c(1) d(1) + z(32) c(1) d(1) \\
& & + u(32) c(1) d(1) + d(1) + d(0), \\
h_1 & = & x(24) y(24) z(32) u(32) + x(24) y(24) z(32) c(1) + x(24) y(24) u(32) c(1) \\
& & +\, x(24) z(32) u(32) c(1) + y(24) z(32) u(32) c(1) + x(24) y(24) d(1) \\
& & +\, x(24) z(32) d(1) + y(24) z(32) d(1) + x(24) u(32) d(1) + y(24) u(32) d(1) \\
& & +\, z(32) u(32) d(1) + x(24) c(1) d(1) + y(24) c(1) d(1)+ z(32) c(1) d(1) \\
& & +\, u(32) c(1) d(1) + x(24) y(24) + x(24) z(32) + y(24) z(32) + x(24) u(32) \\
& & +\, y(24) u(32) + z(32) u(32) + x(24) c(1) + y(24) c(1) + z(32) c(1) \\
& & +\, u(32) c(1) + c(1) + c(0) + d(0). \\
\end{array}
\]
We recall that this inverse system is easily obtained by computing a suitable
\Gr\ basis. The invertibility of the system (\ref{e0sys}) allows us to attack
equivalently any internal state. A convenient choice consists hence in attacking
the state corresponding to the clock where the keystream starts to output.

With the notations of the Section 3, in our experiments we choose to compute
$V_\KK(I_B + J_B)$ for a clock bound $B$ in the range $57\leq B\leq 69$, that is,
we use for the attack the knowledge of a number $K$ of keystream bits in the range
$51\leq K\leq 63$. To reduce the number of tests, we consider $K$ odd.

In this range we have found very few $\KK$-solutions at each instance, in a suitably
large test set, of a guess-and-determine strategy for solving $V_\KK(I_B + J_B)$.
This strategy is based on the exhaustive evaluations of 83 variables and in the
next section we explain how 14 of such variables have been chosen to speed-up
the computations. The considered test set consists of $2^{17}$ evaluations
for $2^3$ different keys.

For each guess of the 83 variables, we are able to determine the number
of $\KK$-solutions of the corresponding polynomial system as the $\KK$-dimension 
of the quotient algebra modulo the ideal generated by the system and the field
equations (see Section 2). Such dimension is easily obtained as a by-product
of the DegRevLex-\Gr\ basis that is computed at each evaluation.

These $\KK$-solutions for wrong guesses of the 83 variables corresponds to
spurious keys which can be detected by using some additional very small number
of values of the keystream. Indeed, for $K = 63$ the number of spurious keys
for each guess is on the average close to zero. Note that a good trade-off consists in using
a value of $K$ that lies approximately in the middle of the range $51\leq K\leq 63$
because the cost of solving grows significantly with $K$ but the cost of computing
and comparing a few extra keystream bits is indeed very small.

Since the explicit difference system of \E0\ contains the combiner equations
which are of degree 2 and 4, the approach we use to define the polynomial system
to solve is the partial elimination described in Proposition \ref{parelim1}
and Proposition \ref{parelim2}.
Namely, we perform elimination by the linear difference equations, that is,
the LFSRs of the system $(\ref{e0sys})$. Once we fix a clock bound $B$,
the corresponding polynomial system is defined over the following variable set
\begin{equation*}
\begin{gathered}
\{x(0),\ldots,x(24),y(0),\ldots,y(30),z(0),\ldots,z(32), \\
u(0),\ldots,u(38),c(0),\ldots,c(B),d(0),\ldots,d(B)\}.
\end{gathered}
\end{equation*}
In our guess-and-determine strategy, a subset of 83 such variables are evaluated
over the finite field $\KK = \GF(2)$ in an exhaustive way, leading to \Gr\ bases computations
that take few tens of milliseconds on the average.

We observe that other partial eliminations could be considered,
including total elimination of all variables except for the initial ones
that are
\begin{equation*}
\begin{gathered}
\{x(0),\ldots,x(24),y(0),\ldots,y(30),z(0),\ldots,z(32), \\
u(0),\ldots,u(38),c(0),c(1),d(0),d(1)\}.
\end{gathered}
\end{equation*}
Indeed, we have experimented that all these variants increase the degree of the
eliminated polynomials in a way that either makes them impossible to be computed
or leads to polynomial systems which are more difficult to solve.

\section{Fourteen useful variables}

The set of the 83 evaluated variables that we have used to attack \E0 by means of
a guess-and-determine strategy applied to its difference stream cipher structure
is the following one
\begin{equation*}
\begin{gathered}
\{
x(0),\ldots,x(24),y(0),\ldots,y(26),y(29),z(0),\ldots,z(10),z(29),\ldots,z(32), \\
u(0),\ldots,u(9),u(35),u(36),u(37),c(0),d(0)
\}.
\end{gathered}
\end{equation*}
Fourteen of the above variables have been single out by means of the arguments in this section.
The remaining 69 variables have been obtained by an experimental optimization.

The monomial ordering that we use for computing \Gr\ bases and normal forms during the attack
to \E0 is defined as the DegRevLex-ordering over the following variable set
\begin{equation*}
\begin{gathered}
\{x(0),y(0),z(0),u(0), x(1),y(1),z(1),u(1),\ldots\} \\
\cup\, \{c(0), c(1), \ldots\}\cup \{d(0), d(1), \ldots\}
\end{gathered}
\end{equation*}
induced by putting
\begin{equation*}
\begin{gathered}
x(0)\prec y(0)\prec z(0)\prec u(0)\prec x(1)\prec y(1)\prec z(1)\prec u(1)\prec \ldots \\
\prec c(0)\prec c(1)\prec \ldots \prec d(0)\prec d(1)\prec \ldots
\end{gathered}
\end{equation*}

We consider the following polynomials which belong to the difference ideal $I$ corresponding to
the explicit difference system $(\ref{e0sys})$
\[
\begin{array}{rcl}
C_0 & = & c(2) + x(1) c(1) + y(7) c(1) + z(1) c(1) + u(7) c(1) + x(1) y(7) + x(1) z(1) \\
    &   & +\, x(1) u(7) + y(7) z(1) + y(7) u(7) + z(1) u(7) + c(1) + d(1) + c(0) + d(0), \\

C_1 & = & c(3) + x(2) c(2) + y(8) c(2) + z(2) c(2) + u(8) c(2) + x(2) y(8) + x(2) z(2) \\
    &   & +\, x(2) u(8) + y(8) z(2) + y(8) u(8) + z(2) u(8) + c(2) + d(2) + c(1) + d(1), \\

D_0 & = & d(2) + x(1) y(7) z(1) u(7) + x(1) y(7) d(1) + x(1) z(1) d(1) + x(1) u(7) d(1) \\
    &   & +\, y(7) z(1) d(1) + y(7) u(7) d(1) + z(1) u(7) d(1) + x(1) c(1) d(1) \\
    &   & +\, y(7) c(1) d(1) + z(1) c(1) d(1) + u(7) c(1) d(1) + x(1) y(7) z(1) c(1) \\
    &   & +\, x(1) y(7) u(7) c(1) + x(1) z(1) u(7) c(1) + y(7) z(1) u(7) c(1) + d(1) + c(0). \\
\end{array}
\]
The above polynomials clearly arise from the combiner equations. Note that these polynomials
are in normal form with respect to the linear polynomials in $I$ corresponding to the LFRSs.
We also consider the following polynomials corresponding to the first 3 keystream bits,
say $b_0,b_1,b_2\in\KK$, of \E0
\[
\begin{array}{rcl}
B_0 & = & x(1) + y(7) + z(1) + u(7) + c(1) + b_0, \\
B_1 & = & x(2) + y(8) + z(2) + u(8) + c(2) + b_1, \\
B_2 & = & x(3) + y(9) + z(3) + u(9) + c(3) + b_2.
\end{array}
\]
These polynomials belong to the ideal $J$ that imposes to the $\KK$-solutions of $I$ to be
compatible with a given keystream (see Section 3). Note that $B_0,B_1,B_2$ are also in normal
form with respect to the LFSRs. Before computing the $\KK$-solutions of $I + J$ by means
of a \Gr\ basis, we can perform the normal form of $C_0,C_1,D_0$ modulo $B_0,B_1,B_2$
in order to eliminate the variables $c(1),c(2),c(3)$. These normal forms are the polynomials
\begin{equation*}
\begin{array}{rcl}
G_1 & = & d(1) + A_1, \\
G_2 & = & d(1) + d(2) + A_2, \\
G_3 & = & A_3 d(1) + d(2) + A_4
\end{array}
\end{equation*}
where
\begin{equation*}
\begin{array}{rcl}
A_1 & = & u(7) x(1) + u(7) y(7) + u(7) z(1) + x(1) y(7) + x(1) z(1) + y(7) z(1)  \\
    &   & +\, b_0 (x(1) + y(7) + z(1) + u(7)) + c(0) + d(0) + x(2) + y(8) + z(2) \\
    &   & +\, u(8) + b_0 + b_1, \\
A_2 & = &  u(8) x(2) + u(8) y(8) + u(8) z(2) + x(2) y(8) + x(2) z(2) + y(8) z(2) \\
    &   & +\, b_1 (x(2) + y(8) + z(2) + u(8)) + x(1) + x(3) + y(7) + y(9) + z(1) \\
    &   & +\, z(3) + u(7) + u(9) + b_0 + b_1 + b_2, \\
A_3 & = & u(7) x(1) + u(7) y(7) + u(7) z(1) + x(1) y(7) + x(1) z(1) + y(7) z(1) \\
    &   & +\, (b_0 + 1) (x(1) + y(7) + z(1) + u(7)) + 1, \\
A_4 & = & u(7) x(1) y(7) z(1) + (b_0 + 1) (u(7) x(1) y(7) + u(7) x(1) z(1) \\
    &   & +\, u(7) y(7) z(1) + x(1) y(7) z(1)) + c(0). \\
\end{array}
\end{equation*}
It is clear that the linear equations $G_1 = G_2 = G_3 = 0$ in the variables $d(1),d(2)$
are inconsistent if and only if
\[
G = (A_1 + 1) A_3 + A_2 + A_4 \neq 0.
\]
Note now that the set of $\KK$-solutions of the equation $G = 0$ is a preimage
of the Boolean function $\KK^{14}\to\KK$ corresponding to the polynomial $G$
in the 14 variables
\begin{equation*}
\begin{gathered}
\{x_1,x_2,x_3, y_7,y_8,y_9, z_1,z_2,z_3, u_7,u_8,u_9, c_0,d_0\}.
\end{gathered}
\end{equation*}
By computing the $\KK$-dimension of the quotient algebra
\[
\KK[x_1,\ldots,d_0]/\langle G, x_1^2 + x_1, \ldots, d_0^2 + d_0 \rangle
\]
we have that the number $\KK$-solutions of $G = 0$ is exactly $2^{13}$, for all bits
$b_0,b_1,b_3\in\KK$. In other words, the Boolean function corresponding to $G$ is
a balanced one, that is, its two preimages have the same number of elements.
Since $G_1,G_2,G_3$ are linear polynomials, this implies that the computation
of the \Gr\ bases of the guess-and-determine strategy is very fast for half
of the evaluations of the considered 14 variables. We can possibly precompute
the $\KK$-solutions of the equation $G = 0$ once given the first 3 keystream
bits $b_0,b_1,b_2$ in order to avoid useless \Gr\ bases computations.
In fact, in the experiments of the next section we perform
\Gr\ bases for all the evaluations of the 14 variables because they are extremely
fast in the case that $G\neq 0$ and one needs the evaluations of $69$ additional
variables to obtain fast computations also in the case $G = 0$.

Remark finally that in our attack to \E0, before performing \Gr\ bases computations,
we always eliminate also the variables $c(t+1)$ ($t\geq 0$) by means of the polynomials
\[
\sigma^t(f) + b(t) = x(t+1) + y(t+7) + z(t+1) + u(t+7) + c(t+1) + b(t)
\]
where $b(t)$ denotes the keystream bit at clock $t$.

\section{Experimental results}

In this section we report the results of our testing activity. Firstly, we code
an algebraic attack to the difference stream cipher \E0 using \Gr\ bases, SAT solvers
and Binary Decision Diagrams. Secondly, we run it on a couple of servers where
the second one is used only to allow parallel computations with large memory consumption
for BDDs. The servers have the following hardware configurations:
\begin{itemize}
\item Intel(R) Core(TM) i9-10900 CPU@2.80GHz, 10 Cores, 20 Threads and 64 Gb of RAM
 --- server A, for short;
\item 2 x Intel(R) Xeon(R) Gold 6258R CPU@2.7GHz, 56 Cores, 112 Threads and 768 Gb of RAM
 --- server B, for short.
\end{itemize}
On both these machines we install a Debian-based Linux distribution as operating system.

As described in the previous sections, for both \Gr\ bases and SAT solvers we make use
of a guess-and-determine strategy based on the evaluation of 83 variables and the
knowledge of a small number $K$ of keystream bits in the range $51\leq K\leq 63$.
Note that Bluetooth specifications require a reinitialization of the initial state
of the explicit difference system $(\ref{e0sys})$ by means of the last 128 keystream
bits obtained in the first 200 clocks. Because our algebraic attack can be performed
using less than 128 keystream bits, a full attack to \E0 is achieved by simply running
our code twice.

To compare \Gr\ bases with SAT solvers, we execute $2^{20}$ different tests on the
server A. More precisely, we consider $2^{17}$ random guesses of the 83 variables
and we use $2^3$ random keys. For each number $K$ of keystream bits, we gather average,
min and max computing times for performing DegRevLex-\Gr\ bases and SAT solving
and we report these data in Table \ref{gbvssat}. The timings are expressed
in milliseconds that are denoted as ``ms''. The chosen \Gr\ bases implementation
for our testing activity is {\sc slimgb} of the computer algebra system {\sc Singular}
\cite{DGPS} and the considered SAT solver is {\sc cryptominisat} \cite{So}.
Note that SAT solving is applied to the the same polynomial systems where
\Gr\ bases are computed once these systems are converted in the Conjunctive
Normal Form (briefly CNF). Note that the ANF-to-CNF conversion time is essentially
negligible since we apply this transform only once and for each evaluation
of the 83 variables we just add the corresponding linear equations to the CNF.

\begin{table}[H]
\centering
\caption{GB vs SAT}
\label{gbvssat}
\vspace{4pt}
\footnotesize
\begin{tabular}{|c|c|c|c|c|}
\hline
$K$ & GB avg & GB min/max & SAT avg & SAT min/max  \\
\hline			               
51 & 31ms   & 1/411ms    & 196ms   & 105/1007ms   \\
\hline			               
53 & 34ms   & 2/480ms    & 220ms   & 121/876ms    \\
\hline			               
55 & 41ms   & 2/522ms    & 230ms   & 134/638ms    \\
\hline			               
57 & 52ms   & 3/620ms    & 245ms   & 143/645ms    \\
\hline			               
59 & 64ms   & 3/799ms    & 283ms   & 161/777ms    \\
\hline			               
61 & 79ms   & 3/1115ms   & 300ms   & 174/732ms    \\
\hline			               
63 & 96ms   & 4/1287ms   & 326ms   & 191/862ms    \\
\hline
\end{tabular}
\end{table}

According to Section 7, all minimum computing times are actually obtained for guesses
of the 14 special variables such that $G\neq 0$.

\begin{table}[H]
\centering
\caption{GB data}
\label{gbdata}
\vspace{4pt}
\footnotesize
\begin{tabular}{|c|c|c|c|c|c|c|c|c|c|}
\hline
$K$ & deg(GB)$=$0 & deg(GB)$=$1 & deg(GB)$=$2 & deg$=$1 avg sol & deg$=$2 avg sol \\
\hline
51 & 83.781\% & 15.243\% & 0.975\%  & 1.442 & 3.154 \\ 
\hline 
53 & 94.023\% & 5.971\%  & 0.005\%  & 1.047 & 3 \\
\hline 
55 & 98.438\% & 1.561\%  & 0.0001\% & 1.011 & 3 \\
\hline 
57 & 99.613\% & 0.386\%  & 0\%      & 1.004 & \\ 
\hline 
59 & 99.901\% & 0.098\%  & 0\%      & 1 & \\
\hline 		        
61 & 99.976\% & 0.023\%  & 0\%      & 1 & \\
\hline 
63 & 99.993\% & 0.006\%  & 0\%      & 1 & \\
\hline
\end{tabular}
\end{table}

Table \ref{gbdata} presents, for different values of $K$, the number of \Gr\ bases
of a certain degree and the average number of (spurious) solutions we compute by means
of such bases. We express the number of \Gr\ bases of some degree as a percentage
of the total number of \Gr\ bases in our test set which is $2^{20}$ for any $K$.
The degree of a \Gr\ basis is the highest degree of its elements up to field equations.
A \Gr\ basis of degree 0 corresponds to an inconsistent polynomial system, that is,
we have no spurious solutions. \Gr\ bases of degree strictly greater than 2 were not
found in our tests.

Data gathered show that the average number of spurious solutions for each
\Gr\ basis drops down very quickly as the number $K$ of keystream bits
slightly increases. If we set $K\geq 59$, more than $99.9\%$ of the \Gr\ bases
provide no spurious solution. The remaining $0.1\%$ consists of \Gr\ bases
of degree 1 with a single solution. Such a solution can be read immediately
from the basis and detected as a spurious one by using few additional
keystream bits. In fact, for $K = 63$ the probability to have spurious
solutions is close to zero.


\smallskip \noindent
In order to validate the timings collected using \Gr\ bases and SAT solvers, we also code
a BDD-based algebraic attack to \E0 and compare new results with those presented
in Table \ref{gbvssat} and in the literature \cite{SGPS,SW}. Indeed, BDDs have been generally
considered the standard in \E0 cryptanalysis. We install \BuDDy\ library package 2.4 \cite{LN}
on our machines and following the approach described in Section 3 of \cite{SGPS},
we generate a number $K$ of BDDs (consisting of unknown key variables) each of which
is associated with a Boolean equation. This set of $K$ equations corresponds to the
number $K$ of given keystream bits. This means that sometimes these equations are equal
to $0$, sometimes to $1$. In the former case, we take the complement of the Boolean equation,
whereas in the latter we do not. Now, we have several Boolean equations in unknown
key variables equate to $1$ and we have to find a common solution for these equations.
Such a solution can be found by ANDing our set of BDDs. Notice that AND operations are
usually extremely expensive both in time and memory, therefore the ordering to perform
ANDing is of fundamental importance. Among the various approaches described in \cite{SGPS}
such as sequential ANDing, ANDing with fixed interval, random ANDing, RSAND and so on,
we adopt RSAND because it takes the overall used memory under control, reducing
(recursively) the number of BDDs by half until it gets the final BDD. 

Now, we are able to conduct an extensive testing to gauge the performance of our
BDD-based attack. We initially consider the same set of 83 variables previously used
with \Gr\ bases and SAT solvers. More precisely, we set $57\leq K\leq 63$, collect
several random guesses of $83$ variables, use $2^3$ random keys and try to recover
the remaining $49$ key bits. Recall that the key for us is the 132-bit internal state
of \E0 at the clock where the keystream starts to output. 

Despite using RSAND, experimental activities show that none of these tests ended due
to lack of memory of our servers. Indeed, the running code requires more than 768 Gb of RAM
which is the maximum amount of memory available on our server B. Therefore, we reduce
the number of unknown key bits to be recovered from $49$ to $39$ and provide some random
evaluation of $93$ variables. Notice that the configuration with $39$ bits was recovered
in 5 seconds on a regular personal computer by the authors of \cite{SGPS}. On server A,
using a single-thread configuration and a few Mb of memory, we are able to recover $39$
unknown key bits in about 0.15 seconds. Interestingly, the set of variables which provide
better results is not the same found with \Gr\ bases and SAT solvers but it is a chunk
of consecutive bits which includes all $39$ variables of the fourth LFSR, namely
$u(0),\ldots, u(38)$. If we increase the number of variables to be solved from $39$
to $40, 41, 42$ and so on, our testing activities suggest to include last variables
of the third LFSR, that is, $z(32), z(31), z(30), \dots$. In particular,
we have experimentally verified that a different chunk of variables, as well as several
variations (consecutive and not), yields worse timings.

We then measure the performance of the better chunk which is identified as
\[
u(0), \ldots, u(38), z(32), z(31), z(30), \ldots
\]
and starting from $39$ variables we increase this set by one variable at each time.
Table \ref{test_buddy} shows the results of the experimental activity with \BuDDy\ library 2.4
on server A which is slightly faster than server B. The timings are given in seconds that
will be denoted as ``s''.

\begin{table}[h]
\centering
\caption{BDD with \BuDDy\ 2.4}
\label{test_buddy}
\vspace{4pt}
\footnotesize
\begin{tabular}{|c|c|c|c|c|}
\hline
key bits & $K$ & exec time & mem used & \# of threads\\
\hline			               
39       & 40  & 0.15s  &   60Mb & 1 \\
\hline			               
40       & 41  & 1.07s  &   240Mb & 1 \\
\hline			               
41       & 42  & 4.75s  &   725Mb & 1 \\
\hline			               
42       & 43  & 19.3s  &   3Gb   & 1 \\
\hline			               
43       & 44  & 94.5s  &   13Gb  & 1 \\
\hline
44 & 45  & -- &  out of mem  & 1 \\
\hline
\end{tabular}
\end{table}

Because elapsed time and memory used grow exponentially, and \BuDDy\ library does not
provide the possibility to run the code on all threads of our servers, we install
\Sylvan\ \cite{DdP}, a decision diagram package which support multi-core architectures. 
The testing activities suggest that our code is slower on \Sylvan\ and faster on
\BuDDy\ 2.4 when executed in single thread mode. Notice that multiple factors can cause
getting speed results slower than the speed to which you are expected but this gap is
easily bridged by increasing the number of threads used. Exploiting the power of
the modern multi-core architectures, the advantage of \Sylvan\ becomes more and more
evident as the number of unknown key bits to recover increases.  

Again, we conducted an extensive testing to gauge the performance of our BDD-attack.
We set $K = 41, 43, 45$, collect $2^9$ random guesses of $91, 89, 87$ variables,
use $2^3$ random keys and we try to recover several bits of the key, collecting average
execution time and memory used. Table \ref{bdd_sylvan} summarizes the results of our
testing activities with \Sylvan\ on all 112 threads of server B. Notice that, due to
time consumption, last four rows of this table do not refer to $2^{12}$ different
tests --- $2^9$ random guesses and $2^3$ random keys --- but to a single execution
with a random key.  

\begin{table}[h]
\centering
\caption{BDD with \Sylvan}
\label{bdd_sylvan}
\vspace{4pt}
\footnotesize
\begin{tabular}{|c|c|c|c|c|}
\hline
key bits & $K$ & exec time & mem used & \# of threads\\
\hline			               
39 &  41 & 1.19s  & 1.47Gb & 112\\
\hline			               
40 &  41 & 1.55s  & 1.51Gb & 112\\
\hline			               
41 &  41 &  1.95s  & 1.60Gb  & 112\\

\hline			               
39 &  43 & 1.34s  & 1.60Gb & 112\\
\hline			               
40 &  43 & 2.03s  & 1.76Gb & 112\\
\hline			               
41 &  43 & 4.71s  & 3.46Gb  & 112\\

\hline			               
39 &  45 &  3.58s  & 3.44Gb & 112\\
\hline			               
40 &  45 &  5.23s  & 3.86Gb & 112\\
\hline			               
41 &  45 & 14.65s  & 7.41Gb  & 112\\
\hline
\hline
43 &  45 & 68.29s &  30Gb  & 112 \\
\hline
44 &  45 & 128.37s &  36Gb  & 112 \\
\hline
45 &  46 & 517.42s  &  233Gb  & 112 \\
\hline
46 &  47 & ---  &  out of mem  & 112 \\
\hline
\end{tabular}
\end{table}

\noindent
Our experimental results suggest that BDD-based algebraic attacks to \E0 are not up to
those obtained by using \Gr\ bases or SAT solvers. In addition to the huge difference
in computing times, note finally that all our \Gr\ bases computations run in less than
0.5 Gb of memory for $K = 63$.

\section{Conclusions}

This paper shows that the notion and theory of difference stream ciphers introduced
in \cite{LST} can be usefully applied to the algebraic cryptanalysis of realworld
ciphers as \E0 that is used in the Bluetooth protocol. In particular, the invertibility
property of the explicit difference system defining the evolution of the state of \E0\
allows to attack any internal state instead of the initial one, reducing computations
in a significant way. Moreover, the variables elimination obtained by the linear
difference equations corresponding to the LFSRs of \E0 contributes to improve
the performance of an algebraic attack. Finally, the difference stream cipher
structure of the Bluetooth encryption reveals that there are 14 special variables
which when evaluated, lead to linear equations among other variables. Such special variables
are useful then to speed-up a guess-and-determine strategy for solving the polynomial
system corresponding to the algebraic attack. Our attack is based on the exhaustive
evaluation of 83 state variables, including the 14 useful ones, and the knowledge
of about 60 keystream bits. We show that a low number of spurious keys are compatible 
with such short keystream which is a possible flaw of the cipher. The average solving
time by means of a \Gr\ basis of the polynomial system corresponding to each evaluation
is about 60 milliseconds. The sequential running time is hence about $2^{79}$
seconds by an ordinary CPU which improves any previous attempt to attack \E0 using
a short keystream. The complexity $2^{83}$ also improves the one obtained by BDD-based
cryptanalysis which is generally estimated as $2^{86}$ (see, for instance, \cite{Kl,SGPS}).
In fact, \Gr\ bases are compared in this paper with other solvers confirming
their feasibility in practical algebraic cryptanalysis already shown in \cite{LST}.
We finally observe that the parallelization of the brute force on the 83 variables
can be easily used to scale down further the runtime.

\section{Acknowledgements}

We would like to thank the anonymous referees for the careful reading of the
manuscript. We have sincerely appreciated all valuable comments and suggestions
as they have significantly improved the readability of the paper.

\end{document}